\newtheorem{theorem}{Theorem}
\newtheorem{lemma}{Lemma}
\newtheorem{corollary}{Corollary}
\theoremstyle{definition}
\newtheorem{definition}{Definition}
\newtheorem{example}{Example}
\let\oldtextbf\textbf
\renewcommand{\textbf}[1]{\oldtextbf{\boldmath #1}}
\renewcommand{\succeq}{\succcurlyeq}
\renewcommand{\ge}{\geqslant}
\renewcommand{\geq}{\geqslant}
\renewcommand{\vec}{\mathbf}
\newcommand{\N}{\mathbb{N}}
\newcommand{\R}{\mathbb{R}}
\newcommand{\set}[1]{{\left\{#1\right\}}}
\DeclareMathOperator{\argmax}{argmax}
\DeclareMathOperator{\argmin}{argmin}
\newcommand{\floor}[1]{\lfloor #1 \rfloor}
\newcommand{\ceil}[1]{\lceil #1 \rceil}
\newcommand{\MNW}{\operatorname{MNW}}
\newcommand{\mnw}{\operatorname{mnw}}
\newcommand{\lex}{\operatorname{lex}}
\newcommand{\tie}{\operatorname{tie}}
\newcommand{\round}{\operatorname{round}}
\newcommand{\alloc}{\vec{A}}
\newcommand{\dist}{\overline{\alloc}}
\newcommand{\allallocs}{\mathbb{A}}
\newcommand{\agents}{{\mathcal{N}}}
\newcommand{\goods}{{\mathcal{M}}}
\newcommand{\vals}{{\vec{v}}}
\newcommand{\mnwt}{\ensuremath{\MNW^{\text{tie}}}}
\newcommand{\si}{S^i_\alloc}
\newcommand{\li}{L^i_\alloc}
\newcommand{\alltruth}{\alloc^{\!\text{truth}}}
\newcommand{\alllie}{\alloc^{\!\text{lie}}}
\newcommand{\ati}[1]{A^{\text{truth}}_{#1}}
\newcommand{\ali}[1]{A^{\text{lie}}_{#1}}
\newcommand{\sit}{S^i_{\alltruth}}
\newcommand{\nonvalued}{{\mathcal{Z}}}
\newcommand{\uvec}{\vec{u}}
\newcommand{\umnw}{\vec{u}^{\mnw}}
\newcommand{\ulex}{\vec{u}^{\lex}}
\newcommand{\utie}{\vec{u}^{\tie}}
\newcommand{\wlex}{\vec{w}^{\lex}}
\newcommand{\wmnw}{\vec{w}^{\mnw}}
\newcommand{\utiek}[1]{u^{\tie}_{#1}}
\newcommand{\uround}{\vec{u}^{\round}}
\newcommand{\sorted}{\vec{s}}
\newcommand{\allocs}{\alloc^{\!*}}
\newcommand{\us}{\vec{u}^{\!*}}
\newcommand{\allocround}{\alloc^{\round}}
\newcommand{\atie}{\alloc^{\tie}}
\newcommand{\atiek}[1]{A^{\tie}_{#1}}
\newcommand{\cE}{\mathcal{E}}
\newcommand{\emdash}{\,---\,}
\title{Fair Division with Binary Valuations:\\One Rule to Rule Them All}
\author{
 Daniel Halpern\\University of Toronto\\\texttt{daniel.halpern@mail.utoronto.ca}
 \and
 Ariel D. Procaccia\\Harvard University\\\texttt{arielpro@seas.harvard.edu}
 \and
 Alexandros Psomas\\Google Research\\\texttt{alexpsomi@cs.berkeley.edu}
 \and
 Nisarg Shah\\University of Toronto\\\texttt{nisarg@cs.toronto.edu}
}
\date{}
\begin{document}
\maketitle
\begin{abstract}
	We study fair allocation of indivisible goods among agents. Prior research focuses on additive agent preferences, which leads to an impossibility when seeking truthfulness, fairness, and efficiency. We show that when agents have binary additive preferences, a compelling rule\emdash maximum Nash welfare (MNW)\emdash provides all three guarantees.
	
	Specifically, we show that deterministic MNW with lexicographic tie-breaking is group strategyproof in addition to being envy-free up to one good and Pareto optimal. 
	We also prove that fractional MNW\emdash known to be group strategyproof, envy-free, and Pareto optimal\emdash can be implemented as a distribution over deterministic MNW allocations, which are envy-free up to one good. Our work establishes maximum Nash welfare as the ultimate allocation rule in the realm of binary additive preferences.
\end{abstract}


\section{Introduction}

Fair division~\cite{Moul03,BT96} is a sprawling field that cuts across scientific disciplines. Among its many challenges, the division of indivisible goods\emdash an ostensible oxymoron\emdash is arguably the most popular in recent years. The goods are ``indivisible'' in the sense that each must be allocated in its entirety to a single agent (think of pieces of jewelry or tickets to different football games in a season). Each agent has her own \emph{valuation function}, which represents the benefit the agent derives from bundles of goods.  

A fully expressive model of valuation functions would have to account for combinatorial preferences. Classic examples include a right shoe that is worthless without its matching left shoe (complementarities), and two identical refrigerators (substitutes). However, rich preferences can be difficult to elicit. It is often assumed, therefore, that the valuation functions are \emph{additive}, that is, that each agent's value for a bundle of goods is the sum of her values for individual goods in the bundle. Additive valuations strike a balance between expressiveness and ease of elicitation; in particular, each agent need only report her value for each good separately. 

Another advantage of additive valuations is that they admit a practical rule that is both (economically) efficient and fair. Specifically, the Maximum Nash Welfare (MNW) solution\emdash which maximizes the product of valuations and, therefore, is obviously Pareto optimal (PO)\emdash is envy-free up to one good (EF1): for any two agents $i$ and $j$, it is always the case that $i$ prefers her own bundle to that of $j$, possibly after removing a single good from the latter bundle~\cite{CKMP+19}. 

The MNW solution, however, is not \emph{strategyproof}, that is, agents can benefit by misreporting their preferences. In fact, under additive valuations, the only Pareto optimal and strategyproof rule is \emph{serial dictatorship}, which is patently unfair~\cite{KM01}. This profound clash between efficiency and truthfulness holds true even when agents can only have three possible values for goods! 

The only hope for reconciling efficiency, fairness and truthfulness, therefore, is to assume that agents' values for goods are \emph{binary}. This assumption is not just a theoretical curiosity: while it obviously comes at a significant cost to expressiveness, it leads to extremely simple elicitation. In this sense, it arguably represents another natural point on the conceptual expressiveness-elicitation Pareto frontier. The same bold tradeoff has long been considered sensible in the literature on voting, where binary values are implicitly represented as \emph{approval} votes~\cite{BF07}; in fact, the assumption underlying some of the recent work on approval-based multi-winner elections~\cite{CJMW19,LS19} is nothing but that of binary additive valuations. It is not surprising, therefore, that several papers in fair division pay special attention to the case of binary additive valuations~\cite{BL16,DS15b,BKV18b,AAGW15,FSVX19}. 

With this rather detailed justification for binary additive valuations in mind, our primary research question is this: \emph{do binary additive valuations admit rules that are efficient, fair, and truthful?}

\subsection{Our Contribution}\label{sec:contrib}
We provide a positive answer\emdash and then some. Specifically, \Cref{thm:det-mnw} asserts that, under binary additive valuations, a particular form of the MNW solution is Pareto optimal, EF1, \emph{group} strategyproof (even a coalition of agents cannot misreport its members' preferences in a way that benefits them all) and polynomial-time computable. 

Furthermore, we show (\Cref{thm:rand-leximin-mnw-properties,thm:rand}) that by randomizing over MNW allocations, we can achieve ex ante envy-freeness (each agent's expected value for their random allocation is at least as high as for any other agent's), ex ante Pareto optimality, ex ante group strategyproofness, and ex post EF1 simultaneously in polynomial time. In other words, randomization allows us to circumvent the mild unfairness that is inherent in deterministic allocations of indivisible goods without losing the other guarantees.

In our view, these results are essentially the final word on how to divide indivisible goods under binary additive valuations.

\subsection{Related Work}\label{sec:related}

There is an extensive body of work on fair division, much too large to survey here. Instead, we focus on the most closely related work on fair division with \emph{binary} valuations.

The most closely related work is that of \citet{babaioff2020fair}, who, independently and in parallel to our work, also discovered some of the results that we present for the deterministic MNW rule. Specifically, their \textit{prioritized egalitarian} mechanism is identical to our deterministic $\mnwt$ mechanism presented in \Cref{sec:determ}. They show that this rule is strategyproof, EFX,\footnote{There are two popular definitions of EFX (see~\cite{ABFH+20}); this result holds for the stronger one: an allocation is EFX if the envy that one agent has toward another can be eliminated by removing \emph{any} good from the envied agent's bundle.} PO, Lorenz-dominating, and polynomial-time computable. This is very similar to our \Cref{thm:det-mnw}. The difference is that we strengthen strategyproofness to group strategyproofness, but only establish EF1 (weaker than EFX) and do not establish Lorenz-dominance. We note that the EFX property is also established by \citet{ABFH+20}. We view these results as complementary to ours, and together, they establish that $\mnwt$ is group strategyproof, EFX, PO, Lorenz-dominating, and polynomial-time computable, making it even more compelling. We note that \citet{babaioff2020fair} do not study randomized allocation rules, which we focus on in \Cref{sec:randomized}. 

\citet{ortega2020multi} studies a slightly more general problem where there may be multiple copies of each good, but each agent can receive at most one copy of any good. His \textit{egalitarian solution} is identical to our fractional MNW rule in terms of the probability of each good going to each agent, but he does not discuss how to implement these fractional allocations as a distribution over integral allocations with good properties. He shows that this rule is ex ante envy-free, ex ante PO, and ex ante group strategyproof. However, he uses a weaker notion of strategyproofness, where agents are only allowed to report a good that they like as one that they do not like, but not vice-versa. As we note in \Cref{sec:randomized}, in our (standard) setting with a single copy of each good, these guarantees (including the stronger strategyproofness notion, or even group strategyproofness) follow directly from prior work~\cite{KPS18}. Hence, our main focus in \Cref{sec:randomized} is to prove an \emph{ex post} EF1 guarantee, which \citet{ortega2020multi} does not provide. 

Two central concepts in our work are those of maximum Nash welfare (MNW) and leximin allocations. \citet{aziz2019almost} show that under binary additive valuations, all leximin allocations are also MNW allocations. As we observe in \Cref{sec:determ}, this, together with known properties of the two solutions, immediately implies that the sets of MNW and leximin allocations are identical. \citet{benabbou2020finding} extend this equivalence to a more general valuation class. 

On the computation front, our polynomial-time computability result for the deterministic $\mnwt$ rule builds upon on efficient algorithms by \citet{DS15b} and \citet{BKV18b} for finding an MNW allocation under binary additive valuations; specifically, our algorithm starts from an arbitrary MNW allocation computed by either of these algorithms, and then iteratively finds a \emph{special} MNW allocation that $\mnwt$ outputs. \citet{benabbou2020finding} also show that an MNW allocation can be computed efficiently under their more general valuation class. 


\section{Preliminaries}\label{sec:prelim}

For $k \in \N$, let $[k] = \{1,\ldots,k\}$. Let $\agents = [n]$ denote a set of \emph{agents}, and $\goods$ denote a set of $m$ indivisible \emph{goods}. Each agent $i$ is endowed with a \emph{valuation} function $v_i: 2^\goods \rightarrow \R_{\geq 0}$ such that $v_i(\emptyset) = 0$. It is assumed that valuations are additive: $\forall T \subseteq \goods$, $v_i(T) = \sum_{g \in T} v_i(\set{g})$. To simplify notation, we write $v_i(g)$ instead of $v_i(\set{g})$. 

We focus on a subclass of additive valuations known as binary additive valuations, under which $v_i(g) \in \set{0,1}$ for all $i \in \agents$ and $g \in \goods$. We say that agent $i$ \emph{likes} good $g$ if $v_i(g) = 1$. Sometimes it is easier to think of the valuation function of agent $i$ as the set of goods that agent $i$ likes, denoted $V_i = \set{g \in \goods : v_i(g) = 1}$. Note that $v_i(T) = |V_i \cap T|$ for all $T \subseteq \goods$. For a set of agents $S \subseteq \agents$, let $V_S = \bigcup_{i \in S} V_i$ be the set of goods that at least one agent in $S$ likes. The vector of agent valuations $\vals = (v_1,\ldots,v_n)$ is called the \emph{valuation profile}. A problem instance is given by the tuple $(\agents, \goods, \vals)$.

For a set of goods $T \subseteq \goods$ and $k \in \N$, let $\Pi_k(T)$ denote the set of partitions of $T$ into $k$ bundles. We say that $\alloc = (A_1,\ldots,A_n)$ is an allocation if $\alloc \in \Pi_n(T)$ for some $T \subseteq \goods$. Here, $A_i$ is the bundle of goods allocated to agent $i$, and $v_i(A_i)$ is the \emph{utility} to agent $i$. Let us denote $A_S = \bigcup_{i \in S} A_i$ for $S \subseteq \agents$. Let $\allallocs = \bigcup_{T \subseteq \goods} \Pi_n(T)$ denote the set of all allocations. 

We say that good $g$ is \emph{non-valued} if $v_i(g) = 0$ for all agents $i$; all the remaining goods are called \emph{valued}. Let $\nonvalued$ denote the set of non-valued goods. We say that an allocation $\alloc$ is \emph{complete} if it allocates every valued good, i.e., if $A_{\agents} \supseteq \goods \setminus \nonvalued$; we say that it is \emph{minimally complete} if it is complete and does not allocate any non-valued goods, i.e., if $A_{\agents} = \goods \setminus \nonvalued$.

We are interested in \emph{fair} allocations. One of the most prominent notions of fairness is envy-freeness~\cite{Fol67}.
\begin{definition}[Envy-freeness]
    An allocation $\alloc$
    is called \emph{envy-free} (EF) if, for all agents $i,j \in \agents$, $v_i(A_i) \geq v_i(A_j)$.
\end{definition}

Envy-freeness requires that no agent prefer another agent's bundle over her own. This cannot be guaranteed (imagine two agents liking a single good). Prior literature focuses on its relaxations, such as envy-freeness up to one good~\cite{LMMS04,Bud11}, which can be guaranteed. 

\begin{definition}[Envy-freeness up to one good]
    An allocation $\alloc$ is called \emph{envy-free up to one good} (EF1) if, for all agents $i,j \in \agents$ such that $A_j\neq \emptyset$, there exists $g \in A_j$ such that $v_i(A_i) \ge v_i(A_j\setminus\set{g})$. 
\end{definition}

EF1 requires that it should be possible to remove envy between any two agents by removing at most one good from the envied agent's bundle. We remark that there is a stronger fairness notion called envy-freeness up to the least positively valued good (EFX)~\cite{CKMP+19}, which coincides with EF1 under binary additive valuations.\footnote{There are two popular definitions of EFX~\cite{ABFH+20}. The original definition by \citet{CKMP+19} asks that agent $i$ not envy agent $j$ after removal of any good from agent $j$'s bundle that has \emph{positive} value for agent $i$, whereas a latter definition omits the requirement of ``positive value''. Under binary additive valuations, the former definition is equivalent to EF1 whereas the latter definition is stronger than EF1.}

Another classic desideratum in resource allocation is Pareto optimality, which is a notion of economic efficiency. 

\begin{definition}[Pareto optimality]
    An allocation $\alloc$ is called \emph{Pareto optimal} (PO) if there does not exist an allocation $\alloc'$ such that for all agents $i \in \agents$, $v_i(A'_i) \geq v_i(A_i)$, and at least one inequality is strict. 
\end{definition}

It is easy to see that with binary additive valuations, Pareto optimality is equivalent to ensuring that each valued good is allocated to one of the agents who likes it, i.e., that the utilitarian social welfare (sum of utilities) is maximized and is equal to the number of valued goods. 


\section{Deterministic Setting}\label{sec:determ}

In this section, our main goal is to establish the existence of a deterministic allocation rule that is fair, efficient, and truthful under binary additive valuations. Our rule builds upon the concept of maximum Nash welfare allocations~\cite{CKMP+19}, which we define below. 

\begin{definition}[Maximum Nash welfare allocation]
	We say that $\alloc$ is a \emph{maximum Nash welfare} (MNW) allocation if, among the set of allocations $\allallocs$, it maximizes the number of agents receiving positive utility and, subject to that, maximizes the product of positive utilities. Formally, let $W(\alloc) = \set{i \in \agents : v_i(A_i) > 0}$ and $\allallocs_M = \argmax_{\alloc \in \allallocs} |W(\alloc)|$. Then, $\argmax_{\alloc \in \allallocs_M} \prod_{i \in W(\alloc)} v_i(A_i)$ is the set of MNW allocations.
\end{definition}

Even under general additive valuations, all maximum Nash welfare allocations satisfy EF1 and PO~\cite{CKMP+19}. Our work uses a connection between MNW allocations and the classic concept of leximin allocations, that holds under binary additive valuations. 

\begin{definition}[Leximin comparison]
	For an allocation $\alloc$, let its \emph{utility vector} be $(v_1(A_1),\ldots,v_n(A_n))$, and its \emph{utility profile} be the utility vector sorted in a non-descending order. Given two utility profiles $\sorted = (s_1,\ldots,s_n)$ and $\sorted' = (s'_1,\ldots,s'_n)$, we say that $\sorted$ leximin-dominates $\sorted'$, denoted $\sorted \succ_{\lex} \sorted'$, if there exists $k \in [n]$ such that $u_k > u'_k$ and $u_r = u'_r$ for all $r < k$. We say that $\sorted$ weakly leximin-dominates $\sorted'$, denoted $\sorted \succeq_{\lex} \sorted'$, if $\sorted \succ_{\lex} \sorted'$ or $\sorted = \sorted'$. Note that this is a total order among utility profiles. We extend these comparisons to utility vectors by applying them to the utility profiles they induce, and call two utility vectors leximin-equivalent if they induce the same utility profile. 
\end{definition}

\begin{definition}[Leximin allocations]
	We say that $\alloc$ is a \emph{leximin allocation} if, among all allocations, it lexicographically maximizes the utility profile, i.e., maximizes the minimum utility, subject to that maximizes the second minimum, and so on. Thus, leximin allocations are those whose utility profile is the greatest element of the total order $\succ_{\lex}$. We also extend the notions of leximin-dominance and weak leximin-dominance to allocations by comparing their utility vectors.
\end{definition}

Leximin is a refinement of the traditional Rawlsian fairness, which requires maximization of the minimum utility. \citet{PR18} and \citet{FSVX19} study leximin allocations (and variants of this definition), and show that they have related fairness properties as well. 

Important to our work is the observation that for binary additive valuations, the sets of leximin and MNW allocations coincide. This is established under a more general valuation class by the contemporary work of \citet{benabbou2020finding}, but for binary additive valuations, this can also be inferred easily from the following observations, which we will use in our work. 

\begin{lemma}\label{lem:leximin-unique-profile}
	All leximin allocations have the same utility profile. Further, any allocation with this utility profile is a leximin allocation.
\end{lemma}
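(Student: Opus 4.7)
The plan is to observe that the lemma is essentially an immediate consequence of unwinding the definition of a leximin allocation together with the stated fact that $\succeq_{\lex}$ is a total order on utility profiles. Both claims reduce to the elementary fact that a finite totally ordered set has a unique maximum element.

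First I would note that, because $\goods$ is finite, there are only finitely many allocations in $\allallocs$, and hence only finitely many distinct utility profiles induced by allocations. Since $\succeq_{\lex}$ is a total order, this finite set of achievable profiles has a unique maximum element; call it $\sorted^*$. By the definition of a leximin allocation --- an allocation whose utility profile is the greatest element of $\succeq_{\lex}$ among those induced by allocations --- an allocation $\alloc$ is a leximin allocation if and only if its utility profile equals $\sorted^*$.

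Both parts of the lemma now follow directly from this ``if and only if'' characterization. For the first part, any two leximin allocations have profile $\sorted^*$ by the forward direction, and are therefore equal on their profiles. For the second part, if some allocation has profile $\sorted^*$, then the reverse direction gives that it is a leximin allocation.

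I do not anticipate any real obstacle: the entire content of the lemma is the uniqueness of the maximum of a finite totally ordered set. The only modest care required is to first argue that this maximum exists --- which follows from finiteness of $\allallocs$ --- so that $\sorted^*$ is well-defined before it is invoked.
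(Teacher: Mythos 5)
Your proposal is correct and follows essentially the same route as the paper, which also argues directly from the fact that $\succ_{\lex}$ is a total order on utility profiles and that leximin allocations are by definition those whose profile is its greatest element. Your additional remark that finiteness of $\allallocs$ guarantees the maximum exists is a harmless bit of extra care the paper leaves implicit.
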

\begin{proof}
This is because lexicographic comparison is a total order among utility profiles, and leximin allocations, by definition, are those whose utility profile is its greatest element. 
\end{proof}

\begin{lemma}[Lemma~21 of \citet{FSVX19}]\label{lem:nash-unique-profile}
	Under binary additive valuations, all maximum Nash welfare allocations have the same utility profile. Further, any allocation with this utility profile is a maximum Nash welfare allocation.
\end{lemma}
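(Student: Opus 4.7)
The second clause follows immediately from the first: the MNW objective — the size of $W(\alloc)$ and then the product of its positive utilities — depends only on the utility profile, so any allocation sharing the common MNW profile is itself an MNW allocation. I therefore focus on showing that all MNW allocations share one sorted profile.

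The plan is a local swap argument that leverages two features of binary additive valuations: since MNW is Pareto optimal~\cite{CKMP+19}, I may assume every allocated good is valued by its recipient, so $v_i(A_i) = |A_i|$; and a transfer of a single good between two agents shifts their utilities by $\pm 1$, an effect easy to evaluate on the Nash product. I first establish a structural condition on any MNW allocation $\alloc$: whenever $i, j \in W(\alloc)$ satisfy $v_i(A_i) \geq v_j(A_j) + 2$, no good $g \in A_i$ may have $v_j(g) = 1$. For otherwise, transferring $g$ from $i$ to $j$ keeps both positive and multiplies the product by $\frac{(v_i(A_i) - 1)(v_j(A_j) + 1)}{v_i(A_i)\, v_j(A_j)} > 1$, contradicting MNW. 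A companion single-transfer argument rules out, for $j \notin W(\alloc)$ and $v_i(A_i) \geq 2$, any good $g \in A_i$ with $v_j(g) = 1$, since that move would strictly increase $|W(\alloc)|$.

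Next, given two MNW allocations $\alloc^1, \alloc^2$ with allocation functions $f_1, f_2$, I form the directed exchange multigraph $H$ on $\agents$ whose arcs are the goods $g$ with $f_1(g) \neq f_2(g)$, oriented from $f_1(g)$ to $f_2(g)$. Pareto optimality ensures every such arc is valued by both endpoints. Since the net out-degree at agent $i$ equals $v_i(A^1_i) - v_i(A^2_i)$, $H$ decomposes into directed cycles together with simple paths from sources ($v_i(A^1_i) > v_i(A^2_i)$) to sinks (reverse). Picking a shortest source-to-sink path $i_0 \to i_1 \to \cdots \to i_k$ forces the intermediate agents to be balanced. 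Swapping along this path in $\alloc^1$ yields an allocation whose utility vector differs from $\uvec^1$ only at the endpoints, with $i_0$ dropping by $1$ and $i_k$ rising by $1$ (intermediate changes cancel). A case analysis on the endpoint utilities $(v_{i_0}(A^1_{i_0}), v_{i_k}(A^1_{i_k}), v_{i_0}(A^2_{i_0}), v_{i_k}(A^2_{i_k}))$, combined with the dual swap on $\alloc^2$ along the reverse path, shows that the only scenario avoiding an immediate contradiction to MNW optimality (either in $|W|$ or in the Nash product) is a \emph{value-swap} at the endpoints: $v_{i_0}(A^1_{i_0}) = v_{i_k}(A^2_{i_k})$ and $v_{i_k}(A^1_{i_k}) = v_{i_0}(A^2_{i_0})$. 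Iterating over all source-to-sink paths, $\uvec^2$ is obtained from $\uvec^1$ by a permutation of coordinates, so the two sorted profiles coincide.

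The main obstacle I anticipate is the boundary case analysis when endpoint utilities equal $0$ or $1$: in these regimes the relevant swap alters $|W(\alloc)|$ rather than the Nash product, and one must combine both the forward swap on $\alloc^1$ and the backward swap on $\alloc^2$ to rule out every non-value-swap configuration.
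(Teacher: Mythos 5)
Your proposal is essentially correct, but note that the paper does not prove this lemma at all\,---\,it imports it as Lemma~21 of \citet{FSVX19}\,---\,so there is no in-paper argument to compare against; what you have written is a self-contained derivation of an externally cited result. Your route closely parallels machinery the paper does develop elsewhere: your exchange multigraph $H$ is the transformation graph $G(\vec{A},\vec{B})$ used in the proof of \Cref{lem:lex-path-mnw} (its cycle/path decomposition is Corollary~3 of \citet{FSVX19}), and your single-path swap inequality is exactly the characterization in \Cref{lem:path}. The heart of the argument checks out: writing $a=v_{i_0}(A^1_{i_0})$, $b=v_{i_k}(A^1_{i_k})$, $a'=v_{i_0}(A^2_{i_0})$, $b'=v_{i_k}(A^2_{i_k})$, the source/sink conditions give $a\ge a'+1$ and $b'\ge b+1$, while optimality of $\alloc^1$ against the forward swap gives $a\le b+1$ and of $\alloc^2$ against the reverse swap gives $b'\le a'+1$; chaining $a\le b+1\le b'\le a'+1\le a$ forces equality throughout, which is precisely your value-swap, and the boundary cases ($b=0$ or $a'=0$) are killed by the count of positive-utility agents exactly as you anticipate. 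Two cleanups are needed. First, restrict $H$ to valued goods: an MNW allocation may hand non-valued goods to agents who do not like them (Pareto optimality only forces every \emph{valued} good to a liker), so without this restriction both the degree identity and the ``valued by both endpoints'' claim fail. Second, the closing ``iterate over all source-to-sink paths'' step is cleanest as an induction on the number of arcs of $H$: each swap produces an allocation with the same utility profile (hence still MNW, since the objective depends only on the profile) and strictly fewer disagreement arcs, terminating when only cycles remain, at which point the two utility vectors coincide; the shortest-path choice and the opening single-transfer structural conditions are never actually used and can be dropped.
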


Under binary additive valuations, given the observations above, the sets of MNW and leximin allocations can be either identical or disjoint. \citet{aziz2019almost} shows that all leximin allocations are also MNW allocations, which implies that the two sets are identical.

\begin{lemma}\label{thm:equiv}
	Under binary additive valuations, the set of maximum Nash welfare allocations coincides with the set of leximin allocations.
\end{lemma}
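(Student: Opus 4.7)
The plan is to combine the three ingredients assembled just before the statement: Lemma~\ref{lem:leximin-unique-profile} (all leximin allocations share a common utility profile, and any allocation realizing it is leximin), Lemma~\ref{lem:nash-unique-profile} (the analogous fact for MNW allocations under binary additive valuations), and the cited result of \citet{aziz2019almost} that every leximin allocation is an MNW allocation. From these three facts the equality of the two sets is a short chain of implications.

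Concretely, I would proceed as follows. Fix any leximin allocation $\alloc$; by Aziz's result it is also an MNW allocation. Let $\sorted^{\lex}$ denote its utility profile. By Lemma~\ref{lem:nash-unique-profile}, every MNW allocation has this same profile $\sorted^{\lex}$, and by Lemma~\ref{lem:leximin-unique-profile}, every allocation whose profile equals $\sorted^{\lex}$ is itself a leximin allocation. Chaining these gives MNW $\subseteq$ leximin. The reverse inclusion is exactly Aziz's theorem. Hence the two sets coincide.

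There is essentially no obstacle here: the statement is a direct corollary of previously established facts, and the only thing to be careful about is invoking the two ``unique profile'' lemmas in the right direction so that the common profile realized by some leximin-and-MNW allocation forces both set memberships for every allocation sharing that profile. I would keep the proof to just a few lines, making this logical chain explicit rather than doing any combinatorial work on the valuations themselves.
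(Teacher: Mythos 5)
Your proposal is correct and matches the paper's argument: the paper likewise combines the two unique-profile lemmas with the result of Aziz et al.\ (phrasing it as ``the two sets are either identical or disjoint, and the inclusion leximin $\subseteq$ MNW rules out disjointness''), which is the same logical chain you make explicit.
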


Henceforth, we will use the terms ``MNW allocation'' and ``leximin allocation'' interchangeably. Before we define our deterministic rule, let us define this concept formally. Fix the set of agents $\agents$ and the set of goods $\goods$. A \emph{deterministic rule} $f$ takes a valuation profile $\vals$ as input and returns an allocation $\alloc$. Note that $f$ is not allowed to return ties. We say that $f$ is EF1 (resp. PO) if it always outputs an allocation that is EF1 (resp. PO). The game-theoretic literature offers the following strong desideratum to prevent strategic manipulations by agents. 

\begin{definition}[Group strategyproofness]
    A deterministic rule $f$ is called \emph{group strategyproof} (GSP) if there do not exist valuation profiles $\vals$ and $\vals'$, and a group of agents $C \subseteq \agents$, such that $v'_k = v_k$ for all $k \in \agents \setminus C$ and $v_j(A'_j) > v_j(A_j)$ for all $j \in C$, where $\alloc = f(\vals)$ and $\alloc' = f(\vals')$.
\end{definition}

A weaker requirement, which only imposes the above property for group $C$ of size $1$ (i.e. prevents manipulations by a single agent) is commonly known as strategyproofness (SP). 
We are now ready to define our rule, which chooses a special MNW allocation. 

\begin{definition}[$\mnwt$]
    The deterministic rule $\mnwt$ returns an allocation $\alloc$ such that:
    \begin{enumerate}
        \item $\alloc$ is an MNW allocation with lexicographically greatest utility vector among all MNW allocations (i.e., among all MNW allocations, it maximizes $v_1(A_1)$, subject to that maximizes $v_2(A_2)$, and so on);\footnote{We note that tie-breaking by agent index is without loss of generality. One can break ties according to any given ordering of the agents, and the corresponding rule will still satisfy all the desiderata.} and
        \item $\alloc$ is minimally complete (i.e. $A_{\agents} = \goods \setminus \nonvalued$).
    \end{enumerate}
    If there are several allocations satisfying both conditions, $\mnwt$ arbitrarily picks one. 
\end{definition}

First, observe that $\mnwt$ is well-defined, i.e., that the set of allocations satisfying both conditions is non-empty. Indeed, the set of allocations satisfying the first condition is trivially non-empty. And for any allocation in this set, there is a corresponding minimally complete allocation\emdash obtained by throwing away all non-valued goods\emdash which has the same utility vector, and therefore still satisfies the first condition. 

The following result establishes the compelling properties of $\mnwt$. The key idea behind the polynomial-time computability is as follows. \citet{DS15b} and \citet{BKV18b} show that under binary additive valuations, \emph{an} MNW allocation can be computed efficiently. Starting from this MNW allocation, we keep moving to lexicographically better MNW allocations, as in the definition of $\mnwt$. 
The algorithm is formally presented as \Cref{alg:mnwt}. 

\begin{theorem}\label{thm:det-mnw}
    Under binary additive valuations, $\mnwt$ is envy-free up to one good, Pareto optimal, group strategyproof, and polynomial-time computable.
\end{theorem}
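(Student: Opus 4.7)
EF1 and PO are immediate: $\alloc = \mnwt(\vals)$ is an MNW allocation and hence EF1 and PO under general additive valuations~\cite{CKMP+19}, and the minimal-completeness clause affects neither property since non-valued goods carry zero utility for every agent. For polynomial-time computability, I would start from any MNW allocation produced by the algorithm of~\citet{DS15b} or~\citet{BKV18b}, and then lift it toward the lex-greatest MNW allocation by iterating over agents $1,2,\ldots,n$ and, for each agent $i$, searching for a sequence of good transfers that strictly increases $v_i(A_i)$ while keeping $v_1(A_1),\ldots,v_{i-1}(A_{i-1})$ fixed and staying inside the set of MNW allocations. Since~\Cref{lem:nash-unique-profile} pins the sorted utility profile, such a sequence reduces to an augmenting path in a bipartite agent--good graph: each good moved into $i$'s bundle has to be compensated by a chain ending at a non-valued good or at an agent whose prescribed utility is preserved. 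Each iteration runs in polynomial time, and the number of iterations is polynomially bounded since each one strictly increases a lexicographic prefix of the utility vector. A final pass discards any remaining non-valued goods.

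\textbf{Group strategyproofness --- the main obstacle.} Suppose for contradiction that $\vals$ and $\vals'$ agree outside a coalition $C$ and that $v_i(A'_i) > v_i(A_i)$ for all $i \in C$, where $\alloc = \mnwt(\vals)$ and $\alloc' = \mnwt(\vals')$. The plan is to view $\alloc'$ as an alternative allocation under the true profile $\vals$ and contradict the maximality of $\alloc$. Since binary utilities are integer, $v_i(A'_i) \geq v_i(A_i) + 1$ for each $i \in C$; and since $\alloc$ is PO, utilitarian welfare cannot strictly increase under $\vals$, so some $j \notin C$ must satisfy $v_j(A'_j) < v_j(A_j)$. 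I would then split on how the \emph{sorted} utility profiles of $\alloc$ and $\alloc'$ compare under $\vals$; by~\Cref{thm:equiv}, either $\alloc \succ_{\lex} \alloc'$ or the two profiles coincide.

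In the first case I would exploit the integer gains on $C$ to build, in the symmetric difference of $\alloc$ and $\alloc'$, an exchange cycle that transfers a liked good from $\alloc'$ into $\alloc$ along a chain whose net effect is a strict leximin improvement over $\alloc$, contradicting~\Cref{thm:equiv}. In the second case, $\alloc'$ is itself an MNW allocation under $\vals$ by~\Cref{lem:nash-unique-profile}, so the lex-by-index tie-breaking inside $\mnwt(\vals)$ forces $(v_i(A_i))_i$ to lexicographically dominate $(v_i(A'_i))_i$ in the natural index order; together with $v_i(A'_i) > v_i(A_i)$ on $C$, this places the first coordinate of disagreement at some $k < \min C$ outside $C$ with $v_k(A_k) > v_k(A'_k)$. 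I would then invert the perspective and construct, from $\alloc$ together with a few swaps taken from $\alloc'$, a candidate allocation under $\vals'$ that ties $\alloc'$'s utility profile but beats $\alloc'$ in lex-by-index order, contradicting $\alloc' = \mnwt(\vals')$. The main obstacle throughout is the exchange construction itself: binary additivity has to be used delicately to consolidate the simultaneous gains of \emph{all} members of $C$ into a single augmenting path rooted at the earliest agent of disagreement, and the tied sub-case compounds the difficulty by requiring transport of swaps between the two different valuation profiles $\vals$ and $\vals'$.
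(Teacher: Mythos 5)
Your EF1 and PO claims are correct and match the paper. The polynomial-time sketch is also on the right track (start from any MNW allocation and repair agents $1,\ldots,n$ in order), but it leaves unproved the one fact that makes such an algorithm correct: that whenever the current MNW allocation disagrees with the lex-greatest utility vector at index $i$, an augmenting path \emph{exists} and a \emph{single} unit transfer along it suffices (the paper proves this, \Cref{lem:lex-path-mnw}, via the path/cycle decomposition of the transformation graph between the current allocation and a lex-optimal one). Without an existence guarantee, your ``search for a sequence of transfers'' could come up empty even though the allocation is not yet lex-greatest.

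The serious gap is in group strategyproofness, where the entire technical content\emdash the exchange construction\emdash is deferred. Two concrete problems. First, your case $\alloc \succ_{\lex} \alloc'$ aims to contradict the leximin-optimality of $\alloc$ under $\vals$ using only the hypothesis that every member of $C$ gains; but that hypothesis alone cannot yield a leximin improvement of $\alloc$ (a coalition of already-rich agents can all gain in some allocation $\alloc'$ without $\alloc$ being leximin-improvable), so any correct argument must also exploit that $\alloc'$ is the $\mnwt$ output under $\vals'$\emdash which your case 1 never does. The paper in fact derives its contradiction entirely on the $\vals'$ side in both situations. Second, the construction you defer \emph{is} the proof: one must identify a specific honest agent who can serve as the source of a pass-back path to a manipulator in $G(\alllie)$. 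The paper does this by taking $i$ to be the lowest-indexed minimum-utility manipulator, letting $S$ be the set of agents reachable in $G(\alltruth)$ from $\set{j : v_j(\ati{j}) < v_i(\ati{i})} \cup \set{i}$, proving $\ati{S} = V_S$ (so $S$ absorbs every good it likes under truth) and that utilities inside $S$ exceed $v_i(\ati{i})$ by at most one; then a counting argument over the subset $R \subseteq S$ of agents from which a manipulator is reachable in $G(\alllie)$ produces an honest $j^* \in R$ with $|\ali{j^*}| < |\ati{j^*}|$. Passing back from $j^*$ to a manipulator along a path in $G(\alllie)$ then either increases Nash welfare or improves the index-lexicographic order under $\vals'$, contradicting $\alllie = \mnwt(\vals')$. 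Your proposal contains no analogue of this reachability/counting machinery, and without it the ``augmenting path rooted at the earliest agent of disagreement'' is an unsubstantiated assertion rather than a proof.
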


Before diving into the proof, we need another concept that we will use repeatedly. The \emph{graph of an allocation} $\alloc$ is a directed graph $G(\alloc) = (V,E)$, where $V$ contains a vertex for each agent, and there is a directed edge $(i,j) \in E$ if and only if there is a good in agent $j$'s bundle that agent $i$ likes (i.e., $A_j \cap V_i \neq \emptyset$). Given a path $P = (u_1,\ldots,u_k)$ in $G(\alloc)$, let $P(\alloc)$ denote an allocation obtained by transferring a good $g \in A_{u_{\ell+1}} \cap V_{u_{\ell}}$ from agent $u_{\ell+1}$ to agent $u_{\ell}$ for each $\ell \in [k-1]$; we refer to this operation as \emph{passing back along $P$}. We characterize MNW allocations in terms of non-existence of a special path in their graph.

\begin{lemma}\label{lem:passback}
	Let $\alloc$ be a Pareto optimal allocation, $P$ be a path from agent $i$ to agent $j$ in $G(\alloc)$, and $\alloc' = P(\alloc)$ be obtained by passing back along $P$. Then 
	$v_j(A'_j) = v_j(A_j)-1$, $v_i(A'_i) = v_i(A_i)+1$, and $v_k(A'_k) = v_k(A_k)$ for all $k \in \agents\setminus\set{i,j}$.
\end{lemma}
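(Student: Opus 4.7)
The plan is to track, agent by agent, exactly which goods enter and leave each bundle during the pass-back operation, and to invoke Pareto optimality only at one specific place: to argue that intermediate agents on the path actually value the goods they pass backward.

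First I would fix notation by writing $P = (u_1, \ldots, u_t)$ with $u_1 = i$ and $u_t = j$. By the definition of $G(\alloc)$, for each $\ell \in [t-1]$ there is a good $g_\ell \in A_{u_{\ell+1}} \cap V_{u_\ell}$; the operation $P(\alloc)$ hands $g_\ell$ from $u_{\ell+1}$ to $u_\ell$. Since the $g_\ell$'s come from pairwise disjoint bundles $A_{u_{\ell+1}}$, they are pairwise distinct, so no good is involved in two transfers at once. Any agent $k$ that does not appear on $P$ is untouched, giving $v_k(A'_k) = v_k(A_k)$.

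Next I would handle the endpoints. The agent $u_1 = i$ only receives $g_1$, which lies in $V_i$, and loses nothing, so $v_i(A'_i) = v_i(A_i) + 1$. The agent $u_t = j$ only loses $g_{t-1}$ and receives nothing. To show this is a loss of exactly $1$, I must show $g_{t-1} \in V_j$; this is the one step that really uses Pareto optimality. Since $g_{t-1} \in V_{u_{t-1}}$, the good $g_{t-1}$ is valued, and $j$ currently holds it. If $j$ did not like $g_{t-1}$, then transferring $g_{t-1}$ from $j$ to $u_{t-1}$ while leaving everything else intact would strictly increase $u_{t-1}$'s utility without decreasing anyone else's, contradicting PO. (This uses the equivalence, noted just after the PO definition in the preliminaries, that under binary additive valuations PO is equivalent to every valued good going to an agent who likes it.) Hence $v_j(A'_j) = v_j(A_j) - 1$.

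The same PO argument generalizes to every intermediate agent $u_\ell$ with $1 < \ell < t$: the good $g_{\ell-1}$ that $u_\ell$ hands to $u_{\ell-1}$ is valued (by $u_{\ell-1}$), so by PO $u_\ell$ must like it too. Thus $u_\ell$ loses a liked good $g_{\ell-1}$ and simultaneously gains a liked good $g_\ell \in V_{u_\ell}$, for a net change of $0$, and $v_{u_\ell}(A'_{u_\ell}) = v_{u_\ell}(A_{u_\ell})$. Combining all of these cases covers every agent and yields the statement. The only real obstacle is recognizing that the PO hypothesis must be used to ensure intermediate (and endpoint $j$) losses are in the agent's liked set; once that observation is made, the rest is just careful bookkeeping of which goods move where.
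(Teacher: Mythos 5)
Your proof is correct and follows the same argument as the paper's: each transferred good is liked by its recipient by definition of the edge, hence valued, hence by PO also liked by the agent giving it up, so intermediate agents break even while $i$ gains one and $j$ loses one. Your additional observation that the transferred goods are pairwise distinct (coming from disjoint bundles) is a detail the paper leaves implicit.
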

\begin{proof}
	Note that if good $g$ is being passed from agent $u_{\ell+1}$ to agent $u_{\ell}$ on path $P$, then by definition $u_{\ell}$ must like $g$. Hence, $g$ is a valued good. Thus, by PO, $u_{\ell+1}$ must like $g$ as well. Thus, each agent on $P$ except $i$ and $j$ loses a good she likes and receives a good she likes, agent $j$ only loses a good she likes, and agent $i$ only receives a good she likes. 
\end{proof}

\begin{lemma}\label{lem:path}
	A Pareto optimal allocation $\alloc$ is an MNW allocation if and only if there is no directed path from an agent $i$ to an agent $j$ in $G(\alloc)$ such that $v_j(A_j) > v_i(A_i)+1$. 
\end{lemma}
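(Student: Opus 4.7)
The plan is to prove the two directions separately. For the forward direction, I argue by contrapositive: assume $P$ is a directed path from $i$ to $j$ in $G(\alloc)$ with $v_j(A_j) > v_i(A_i)+1$, and show $\alloc$ cannot be MNW. By \Cref{lem:passback}, $\alloc' = P(\alloc)$ agrees with $\alloc$ in utility everywhere except $v_i(A'_i) = v_i(A_i)+1$ and $v_j(A'_j) = v_j(A_j)-1 \geq v_i(A_i)+1 \geq 1$. If $v_i(A_i) = 0$, then $i$ newly joins $W(\alloc')$ while $j$ remains there, so $|W(\alloc')| > |W(\alloc)|$, contradicting the first-stage MNW criterion. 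Otherwise $W(\alloc') = W(\alloc)$, and the identity $(v_i(A_i)+1)(v_j(A_j)-1) - v_i(A_i)\,v_j(A_j) = v_j(A_j)-v_i(A_i)-1 > 0$ shows the Nash product strictly increases, again contradicting MNW.

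For the backward direction, suppose for contradiction that $\alloc$ is PO but not MNW. By \Cref{thm:equiv}, $\alloc$ is then not leximin either, so there is an MNW (hence leximin and PO) allocation $\alloc^*$ with $\alloc^* \succ_{\lex} \alloc$. Writing $u_i = v_i(A_i)$ and $u^*_i = v_i(A^*_i)$, I build the \emph{exchange multigraph} $H$ on $\agents$: for each valued good $g$ whose recipient differs in $\alloc^*$ from $\alloc$, add a directed edge from its $\alloc^*$-recipient to its $\alloc$-recipient, labelled by $g$. Because $\alloc^*$ is PO the $\alloc^*$-recipient likes $g$, and since $g$ lies in the $\alloc$-recipient's bundle, every edge of $H$ is also an edge of $G(\alloc)$. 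The net out-degree of vertex $i$ in $H$ equals $u^*_i - u_i$, so a standard flow decomposition splits $H$ into edge-disjoint simple directed paths from $S = \set{i : u^*_i > u_i}$ to $T = \set{i : u^*_i < u_i}$, plus cycles. Each such path from $i^*$ to $j^*$ is a directed path in $G(\alloc)$; by \Cref{lem:passback}, pass-back along it implements a unit utility transfer from $j^*$ to $i^*$ in $\alloc$, and executing all the path- and cycle-transfers in any order takes $\alloc$ to $\alloc^*$.

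The crux is to exhibit one path with $u_{j^*} \geq u_{i^*}+2$, which is precisely the bad path sought. Suppose instead that every path in the decomposition satisfies the ``non-smoothing'' inequality $u_{j^*} \leq u_{i^*}+1$. The key observation is that this inequality persists at execution time regardless of the order chosen: each source $i^* \in S$ is never the endpoint of a finishing transfer and its utility only increases over time, while each sink $j^* \in T$ only loses utility, so the current endpoint gap is bounded by the initial gap, namely at most $1$. A non-smoothing unit transfer is a reverse-Pigou--Dalton operation\emdash a pure swap (leximin-neutral) when the gap equals $1$, and a strict leximin-worsening when it is at most $0$\emdash so it is always leximin non-increasing on the utility multiset. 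Composing would give $\alloc \succeq_{\lex} \alloc^*$, contradicting $\alloc^* \succ_{\lex} \alloc$. The main obstacle I anticipate is writing the flow decomposition of $H$ and the monotonicity-preservation argument cleanly enough to be fully rigorous; once they are in place, the desired bad path in $G(\alloc)$ drops out immediately from the decomposition.
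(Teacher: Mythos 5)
Your proof is correct, but it takes a genuinely different route from the paper's. The paper disposes of this lemma in two lines: it invokes Lemma~3 of \citet{BKV18b}, which already characterizes MNW allocations by the non-existence of a path whose pass-back increases Nash welfare, and then uses \Cref{lem:passback} to translate ``increases Nash welfare'' into $v_j(A_j) > v_i(A_i)+1$ (essentially your forward-direction computation, which also handles the $v_i(A_i)=0$ case correctly). You instead re-prove the hard direction from scratch: you invoke the MNW--leximin equivalence (\Cref{thm:equiv}, which the paper establishes independently of this lemma, so there is no circularity), build the transformation multigraph between $\alloc$ and a leximin allocation $\alloc^*$, decompose it into source-to-sink paths plus cycles, and argue that if every path's endpoint gap were at most $1$, then each unit transfer would be a leximin-non-improving reverse Pigou--Dalton step\emdash the observation that sources only gain and sinks only lose during execution is exactly what makes ``in any order'' work\emdash contradicting $\alloc^* \succ_{\lex} \alloc$. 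The two places where care is needed both check out: Pareto optimality of \emph{both} allocations is what makes every edge of the transformation graph an edge of $G(\alloc)$ and makes net out-degrees equal utility differences, and the decomposition you need is precisely Corollary~3 of \citet{FSVX19}, which the paper itself uses later in the proof of \Cref{lem:lex-path-mnw}. So your argument is self-contained (modulo that standard decomposition) where the paper's is citation-based, at the cost of length; it also makes the mechanism behind the \citet{BKV18b} characterization explicit rather than imported.
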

\begin{proof}
	Lemma~3 of \citet{BKV18b} establishes that $\alloc$ is an MNW allocation if and only if there is no directed path $P$ such that passing back along $P$ strictly increases Nash welfare.\footnote{Technically, either more agents receive positive utility, or the product of positive utilities increases.} Given that $\alloc$ is PO, \Cref{lem:passback} implies that this is equivalent to $(v_j(A_j)-1)\cdot (v_i(A_i)+1) > v_j(A_j) \cdot v_i(A_i)$, which is equivalent to $v_j(A_j) > v_i(A_i)+1$.
	
\end{proof}

We are now ready to prove \Cref{thm:det-mnw}.
\begin{proof}
\citet{CKMP+19} already establish that all MNW allocations are EF1 and PO, even for general additive valuations. Hence, $\mnwt$ is also trivially EF1 and PO. 

\paragraph{Group strategyproofness.} We now establish that it is GSP. Note that this holds regardless of how ties are broken among allocations satisfying the two conditions in the definition of $\mnwt$.

First, notice that if $\alloc = \mnwt(\vals)$, then $\alloc$ is minimally complete and PO. Hence, if agent $i$ receives good $g$, she must like it. In other words, $A_i \subseteq V_i$, and thus, $v_i(A_i) = |A_i|$ for each agent $i \in \agents$. Consequently, $A_U \subseteq V_U$ for every subset of agents $U \subseteq \agents$. We will use this observation repeatedly. 

Next, for an allocation $\alloc$ and agent $i \in \agents$, define $\li = \set{j \in \agents\ |\ v_j(A_j) < v_i(A_i)}$ to be the set of agents who have strictly less utility than agent $i$, and define $\si$ to be the set of agents reachable from $\li \cup \set{i}$ in $G(\alloc)$. The following lemma shows that agents in $\si$ must collectively receive all the goods that they like. 
 
 \begin{lemma}\label{lem:sisame}
    If $\alloc = \mnwt(\vals)$, then for each agent $i \in \agents$, we have $A_{\si} = V_{\si}$.
 \end{lemma}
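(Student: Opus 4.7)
The plan is to prove the two inclusions $A_{\si} \subseteq V_{\si}$ and $V_{\si} \subseteq A_{\si}$ separately. Both will rest only on the fact, already extracted in the paragraph immediately preceding the lemma, that the output of $\mnwt$ is minimally complete and Pareto optimal. This gives us two facts we will use repeatedly: each agent receives only goods she likes ($A_k \subseteq V_k$ for every $k \in \agents$), and every valued good is allocated to exactly one agent.

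The forward inclusion $A_{\si} \subseteq V_{\si}$ is immediate: simply take a union of $A_k \subseteq V_k$ over all $k \in \si$.

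For the reverse inclusion $V_{\si} \subseteq A_{\si}$, I would argue by contradiction. Suppose some good $g \in V_{\si}$ is not in $A_{\si}$. Pick an agent $k \in \si$ with $v_k(g)=1$. Since $\alloc$ is minimally complete and $g$ is valued, $g$ must lie in $A_j$ for a (unique) agent $j$, and the hypothesis $g \notin A_{\si}$ forces $j \notin \si$. On the other hand, $g \in A_j \cap V_k$ means precisely that the directed edge $(k,j)$ is present in $G(\alloc)$. Since $k \in \si$ is by definition reachable from $\li \cup \set{i}$, so is $j$, giving $j \in \si$ and contradicting $j \notin \si$.

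The only delicate point is invoking the closure of $\si$ under outgoing edges in $G(\alloc)$, which is immediate from its definition as a reachability set, so there is no real obstacle here. Notably, the argument never uses the lexicographically-best-MNW structure of $\mnwt$'s output: the statement in fact holds for any Pareto optimal and minimally complete allocation, and it will be applied later precisely because $\mnwt$'s output is of this form.
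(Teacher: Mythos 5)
Your proof is correct and follows essentially the same route as the paper: the forward inclusion from $A_k \subseteq V_k$ (a consequence of minimal completeness and PO), and the reverse inclusion by contradiction via an outgoing edge from $\si$ in $G(\alloc)$, which the reachability definition of $\si$ forbids. Your closing observation that only PO and minimal completeness are needed is accurate and consistent with how the paper uses the lemma.
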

 \begin{proof}
    We have already established that $A_{\si} \subseteq V_{\si}$. Suppose for contradiction that there exists a good $g \in V_{\si} \setminus A_{\si}$. Then, by the construction of $G(\alloc)$, there would have been an edge from an agent in $\si$ who likes $g$ to an agent outside of $\si$ who is allocated $g$ under $\alloc$ (note that $g$ is valued, so it must be allocated under $\alloc$). However, the definition of $\si$ implies that it cannot have any outgoing edges, otherwise the set of agents reachable from $\li \cup \set{i}$ could be expanded. Hence, we have $A_{\si} = V_{\si}$.
 \end{proof}

Next, we show that even though $\si$ contains all agents reachable from $\li \cup \set{i}$, an agent in $\si$ cannot have much higher utility than agent $i$ does. 
 
\begin{lemma}\label{lem:si}
    If $\alloc = \mnwt(\vals)$, then for each agent $i \in \agents$ and each agent $j \in \si$, we have that $v_j(A_j) \leq v_i(A_i) + 1$, and if $j \ge i$, then $v_j(A_j) \leq v_i(A_i)$. 
\end{lemma}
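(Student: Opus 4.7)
The plan is to establish the first inequality via the MNW path characterization (Lemma~\ref{lem:path}), and then to upgrade it to the strict bound when $j \geq i$ by exploiting the lexicographic tie-breaking in the definition of $\mnwt$ through a passing-back argument.

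First, since $j \in \si$, there is a directed path in $G(\alloc)$ from some $k \in \li \cup \set{i}$ to $j$. The allocation $\alloc$ is PO (established earlier in the proof) and MNW (by Lemma~\ref{thm:equiv}), so Lemma~\ref{lem:path} gives $v_j(A_j) \leq v_k(A_k) + 1$. Splitting on whether $k = i$ or $k \in \li$ yields $v_j(A_j) \leq v_i(A_i) + 1$ in the former case, and the stronger $v_j(A_j) \leq v_i(A_i)$ in the latter case. This already establishes the first inequality and, in particular, the second inequality whenever $j$ is reachable from some $k \in \li$.

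For the refined bound when $j \geq i$, the case $j = i$ is trivial, so assume $j > i$. The only scenario not handled by the previous step is $v_j(A_j) = v_i(A_i) + 1$ with every path from $\li \cup \set{i}$ to $j$ necessarily originating at $i$ itself. I would fix one such path $P$ from $i$ to $j$ and consider $\alloc' = P(\alloc)$. By Lemma~\ref{lem:passback}, $\alloc'$ agrees with $\alloc$ on every coordinate except $i$ and $j$, where the utilities are swapped ($v_i$ goes up by one, $v_j$ goes down by one). Hence $\alloc'$ induces the same utility profile as $\alloc$ and is therefore MNW by Lemma~\ref{lem:nash-unique-profile}; it is also minimally complete, since passing back only redistributes already-allocated goods. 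Since the first coordinate at which the two utility vectors differ is position $i$, where $\alloc'$ strictly exceeds $\alloc$, the utility vector of $\alloc'$ is lexicographically greater than that of $\alloc$, contradicting the first condition in the definition of $\mnwt$.

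The step I expect to be the main obstacle is verifying that the perturbed allocation $\alloc'$ is a legitimate competitor under $\mnwt$'s tie-breaking rule---that is, simultaneously MNW and minimally complete---so that its lexicographically greater utility vector actually yields the desired contradiction. Once that bookkeeping is pinned down, the rest is direct manipulation of Lemmas~\ref{lem:path} and~\ref{lem:passback}.
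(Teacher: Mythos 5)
Your proposal is correct and follows essentially the same route as the paper: apply \Cref{lem:path} to a path from $\li \cup \set{i}$ to $j$ to get the $+1$ bound (with the stronger bound when the path starts in $\li$), and in the remaining case $j>i$ with $v_j(A_j)=v_i(A_i)+1$, pass back along a path from $i$ to $j$ and invoke \Cref{lem:nash-unique-profile} to obtain a lexicographically better MNW allocation, contradicting the tie-breaking in $\mnwt$. The bookkeeping you flag as the main obstacle is exactly as you describe and goes through without issue (indeed, contradicting the lexicographic-maximality condition does not even require checking minimal completeness of $\alloc'$).
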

\begin{proof}
	This is trivial for $j=i$, so assume $j\neq i$, and suppose for contradiction that the statement is false. Hence, there exist agents $i \in \agents$ and $j \in \si$ such that either $v_j(A_j) \ge v_i(A_i)+2$, or $v_j(A_j) = v_i(A_i)+1$ and $j > i$. 
	
	Since $j \in \si$, there exists a path from an agent $k \in \li \cup \set{i}$ to agent $j$. Further, $k \in \li \cup \set{i}$ implies that $v_k(A_k) \le v_i(A_i)$ by definition. 
	
	Now, in the former case, we would have that there exists a path from agent $k$ to agent $j$ and $v_j(A_j) \ge v_i(A_i)+2 \ge v_k(A_k)+2$. However, this contradicts \Cref{lem:path}. 
	
	In the latter case, we consider two sub-cases. If $k \neq i$, then $k \in \li$. Hence, $v_k(A_k) < v_i(A_i)$. This implies $v_j(A_j) = v_i(A_i)+1 \ge v_k(A_k)+2$, which leads to a contradiction as pointed out above. If $k=i$, then we have a path from agent $i$ to agent $j > i$ with $v_j(A_j) = v_i(A_i)+1$. Once again, passing back along this path would result in an allocation $\alloc'$ under which $v_t(A'_t) = v_t(A_t)$ for all $t \neq i,j$, $v_i(A'_i) = v_i(A_i)+1 = v_j(A_j)$, and $v_j(A'_j) = v_j(A_j)-1 = v_i(A_i)$. Since $\alloc'$ has the same utility profile as $\alloc$, by \Cref{lem:nash-unique-profile}, $\alloc'$ is also a maximum Nash welfare allocation. Further, since a lower-indexed agent receives higher utility, $\alloc'$ is lexicographically better than $\alloc$, which contradicts the fact that $\alloc$ was returned by $\mnwt$.
\end{proof}

We are now ready to show that $\mnwt$ is GSP. Suppose for contradiction that there exist valuation profiles $\vals$ and $\vals'$, and a set of agents $C \subseteq \agents$ such that $v_t = v'_t$ for all $t \notin C$ and $v_j(\ali{j}) > v_j(\ati{j})$ for each $j \in C$, where $\alltruth = \mnwt(\vals)$ and $\alllie = \mnwt(\vals')$.

Let $i = \min \left[\argmin_{t \in C} v_t(\ati{t})\right]$ be the agent in $C$ who has the lowest index among all agents in $C$ having the minimum utility under honest reporting. For simplicity, let us denote $S = \sit$. We have that for every $j \in C$, $|V_j \cap \ati{j}| < |V_j \cap \ali{j}|$. Further, since $\ati{j} \subseteq V_j$, this simplifies to $|\ati{j}| < |V_j \cap \ali{j}|$. When $j \in S \cap C$, we get $|\ati{j}| < |V_j \cap \ali{j}| \leq |V_S \cap \ali{j}|$ because $V_j \subseteq V_S$. 

Let $R \subseteq S$ be the set of agents in $S$ from which some agent in $C$ is reachable in $G(\alllie)$. We now establish that some non-manipulating agent in $R$ must receive strictly fewer goods under $\alllie$ than under $\alltruth$. 

\begin{lemma}\label{lem:liesgetsless}
There exists $j^* \in R \setminus C$ with $|\ali{j}| < |\ati{j}|$.
\end{lemma}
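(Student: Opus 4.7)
The plan is a double-counting argument inside $V_S$ that shows $R$ collectively receives strictly fewer goods under $\alllie$ than under $\alltruth$; since every $j \in R \cap C$ actually \emph{gains} ($|\ali{j}| \ge v_j(\ali{j}) > v_j(\ati{j}) = |\ati{j}|$), any loser must lie in $R \setminus C$. I will first record the easy fact that $S \cap C \subseteq R$: every $c \in S \cap C$ is in $R$ via the length-$0$ path from $c$ to $c \in C$ in $G(\alllie)$. Consequently $i \in R \cap C$, and every $s \in S \setminus R$ is non-manipulating, so $v_s = v'_s$ and hence $V_s = V'_s$.

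The main step will be the structural claim $V_{S \setminus R} \cap \ali{R} = \emptyset$. Indeed, if $g \in V_s \cap \ali{k}$ for some $s \in S \setminus R$ and $k \in R$, then $V_s = V'_s$ makes $g$ witness an edge $(s,k)$ in $G(\alllie)$; concatenating with the path from $k$ to some $c \in C$ places $s \in R$, contradicting $s \in S \setminus R$. Set $W := V_S \setminus V_{S \setminus R}$; this gives $\ali{R} \cap V_S \subseteq W$. Lemma~\ref{lem:sisame} yields $|V_S| = |\ati{S}|$, and Pareto optimality of $\alltruth$ gives $\ati{S \setminus R} \subseteq V_{S \setminus R}$, so
\[
|W| = |V_S| - |V_{S \setminus R}| \le |\ati{S}| - |\ati{S \setminus R}| = |\ati{R}|.
\]

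To finish, I split $\ali{R} \cap V_S$ according to whether the receiving agent manipulates. For $r \in R \setminus C$, $\ali{r} \subseteq V_r \subseteq V_S$, so these bundles contribute exactly $|\ali{R \setminus C}|$. For $j \in R \cap C$, $V_j \subseteq V_S$ together with $|V_j \cap \ali{j}| = v_j(\ali{j}) \ge |\ati{j}| + 1$ gives $|V_S \cap \ali{R \cap C}| \ge |\ati{R \cap C}| + |R \cap C|$. Combining with $|\ali{R} \cap V_S| \le |W| \le |\ati{R}|$ produces
\[
|\ali{R \setminus C}| + |\ati{R \cap C}| + |R \cap C| \le |\ati{R \setminus C}| + |\ati{R \cap C}|,
\]
so $|\ali{R \setminus C}| \le |\ati{R \setminus C}| - |R \cap C| < |\ati{R \setminus C}|$, and pigeonhole delivers the desired $j^* \in R \setminus C$ with $|\ali{j^*}| < |\ati{j^*}|$.

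I expect the structural step $V_{S \setminus R} \cap \ali{R} = \emptyset$ to be the only genuinely non-obvious part of the argument. Once identified, the lemma reduces to clean bookkeeping that pairs Lemma~\ref{lem:sisame} with the fact that the manipulators' strict gains must consume $V_S$-goods that otherwise would have been allocated to $R$ under $\alltruth$.
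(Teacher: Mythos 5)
Your proof is correct and rests on the same two ingredients as the paper's: the count of $V_S$-goods that $R$ receives under $\alllie$ versus $\alltruth$ (non-manipulators in $R$ only hold $V_S$-goods, manipulators in $R\cap C$ strictly gain, and $\ati{S}=V_S$ by Lemma~\ref{lem:sisame}), combined with the fact that no agent of $S\setminus R$ can have an edge into $R$ in $G(\alllie)$. The only difference is presentational: the paper argues by contradiction and exhibits a single offending good, whereas you prove the closure property $V_{S\setminus R}\cap \ali{R}=\emptyset$ up front and run the count directly, which incidentally yields the slightly sharper bound $|\ali{R\setminus C}|\le |\ati{R\setminus C}|-|R\cap C|$.
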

\begin{proof}
Suppose for a contradiction that for all $j \in R \setminus C$, $|\ali{j}| \geq |\ati{j}|$. Take a $j \in R\setminus C$. Since $j \notin C$, she reports $v'_j = v_j$. Hence, we have $\ali{j} \subseteq V'_j = V_j$. Further, since $j \in R \subseteq S$, we have $V_j \subseteq V_S$ by definition. We conclude that for each $j \in R\setminus C$, $\ali{j} \subseteq V_S$, so $|\ali{j} \cap V_S| = |\ali{j}| \geq |\ati{j}|$. 

Additionally, for each $j \in R \cap C \subseteq C$, we have that $|\ali{j} \cap V_S| \geq |\ali{j} \cap V_j| > |\ati{j}|$. Since bundles of an allocation are disjoint, we can add these inequalities over all $j \in (R\setminus C) \cup (R \cap C) = R$ to get $|\ali{R} \cap V_S| > |\ati{R}|$. The inequality is strict because $R \cap C \neq \emptyset$ as $i \in R\cap C$ by definition. Now, recall that by \Cref{lem:sisame}, $\ati{S} = V_S$. Hence, this becomes $|\ali{R} \cap \ati{S}| > |\ati{R}|$. 

This implies that there must exist a good $g$ that is in both $\ali{R}$ and $\ati{S}$ but not in $\ati{R}$. Therefore, there exist agents $t \in R$ and $k \in S \setminus R$ such that $g \in \ali{t}$ and $g \in \ati{k}$. The latter implies $v_k(g) = 1$ due to Pareto optimality of $\alltruth$. 

Since $k \notin R$, by definition $k$ does not have a path to an agent in $C$ under $G(\alllie)$. This trivially implies $k \notin C$ since every vertex is reachable from itself. Since only members of $C$ changed their reported valuations, $v'_k(g) = v_k(g) = 1$. It follows that there must be an edge from agent $k$ to agent $t$ in $G(\alllie)$. Thus, all vertices reachable from $t$ are also reachable from $k$. But then, $t \in R$ implies $k \in R$, which is a contradiction.
\end{proof}

Consider an agent $j^* \in R \setminus C$ as per \Cref{lem:liesgetsless}. Since $j^* \in R$, there must exist a path $P$ in $G(\alllie)$ from $j^*$ to some agent $k \in C$. Let $\alloc'$ denote the allocation obtained by passing back along path $P$. We show that $\alloc'$ must be preferred to $\alllie$ by $\mnwt$ given valuation profile $\vals'$, contradicting the fact that $\mnwt(\vals') = \alllie$. 

Note that since $\alllie$ is PO under valuation profile $\vals'$, when constructing $\alloc'$ from $\alllie$, we get $v'_t(A'_t) = v'_t(\ali{t})$ for all $t \neq j^*,k$, $v'_k(A'_k) = v'_k(\ali{k})-1$, and $v_{j^*}(A'_{j^*}) = v_{j^*}(\ali{j^*})+1$ due to \Cref{lem:passback}; recall that $j^* \notin C$, so $v_{j^*} = v'_{j^*}$. Further, the set of goods allocated does not change. Hence, $\alloc'$ remains minimally complete. 

If $v_{j^*}(\ali{j^*}) + 2 \le v'_k(\ali{k})$, then, by \Cref{lem:path}, this contradicts the fact that $\alllie$ is an MNW allocation. Hence, we must have 
\begin{align}
    v_{j^*}(\ali{j^*})+1 &\ge v'_k(\ali{k}) = |\ali{k}| \ge v_k(\ali{k}) \ge v_k(\ati{k})+1 \ge v_i(\ati{i})+1,\label{eqn:1}
\end{align}
where the second transition is because $\alllie$ is minimally complete and PO under valuation profile $\vals'$, the fourth transition is because $k \in C$, and the last transition is due to the choice of $i$. On the other hand, we also have 
\begin{equation}\label{eqn:2}
v_{j^*}(\ali{j^*})+1 \le v_{j^*}(\ati{j^*}) \le v_i(\ati{i})+1,
\end{equation}
where the first transition holds because, due to \Cref{lem:liesgetsless}, $v_{j^*}(\ali{j^*}) \le |\ali{j^*}| < |\ati{j^*}| = v_{j^*}(\ati{j^*})$, and the second transition holds due to \Cref{lem:si} and the fact that $j^* \in R \subseteq S$.

Putting Equations~\eqref{eqn:1} and~\eqref{eqn:2} together, we have 
\begin{align*}
v_{j^*}(\ali{j^*}) + 1 &= v_{j^*}(\ati{j^*}) = v_i(\ati{i}) + 1 = v_k(\ati{k}) + 1 = v'_k(\ali{k}).
\end{align*}

By the second equality and \Cref{lem:si}, we must have $j^* < i$. By the third equality, the fact that $k$ and $i$ have the same utility under $\alltruth$, and the definition of $i$, we have that $k \geq i$. Therefore, $k > j^*$. Then, as argued in the proof of \Cref{lem:si}, under the valuation profile $\vals'$, $\alloc'$ has the same utility profile as $\alllie$, and thus, by \Cref{lem:nash-unique-profile}, it is an MNW allocation. Further, it is lexicographically better than $\alllie$ under $\vals'$, which contradicts the fact that $\alllie = \mnwt(\vals')$. 

\paragraph{Polynomial-time computability.} Finally, we show that $\mnwt$ can be computed in polynomial time. Fix an instance $(\agents,\goods,\vals)$. Without loss of generality, suppose there are no non-valued goods. This is because if there are any non-valued goods, we can simply remove them, and run the algorithm below on the remaining instance. 

Let $\utie$ be the utility vector that is lexicographically greatest among the utility vectors of all MNW allocations. Our goal is to compute an allocation that achieves this utility vector. Our algorithm relies on the following important lemma.

\begin{lemma}\label{lem:lex-path-mnw}
	Suppose $\alloc$ is an MNW allocation with utility vector $\uvec = (u_1,\ldots,u_n) \neq \utie$. Let $i$ be smallest index such that $u_i \neq \utiek{i}$. Then, $\utiek{i} = u_i + 1$, and there exists $j > i$ such that there is a path from $i$ to $j$ in $G(\alloc)$ and $u_j = u_i + 1$.
\end{lemma}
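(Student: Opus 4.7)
The plan is to exploit the equality of sorted utility profiles between $\alloc$ and an MNW allocation achieving $\utie$ (guaranteed by \Cref{lem:nash-unique-profile}) together with the graph-theoretic characterisation of MNW allocations in \Cref{lem:path}. Fix such an allocation and call it $\alloc^*$, with bundles $A^*_k$. The first step is a multiset observation: since $u_r = \utiek{r}$ for every $r < i$ and the sorted profiles of $\uvec$ and $\utie$ are identical, the multisets $\set{u_k : k \geq i}$ and $\set{\utiek{k} : k \geq i}$ coincide, so the value $\utiek{i}$ must appear in $\set{u_k : k \geq i}$. Because $u_i \neq \utiek{i}$, there exists an index $j^* > i$ with $u_{j^*} = \utiek{i}$.

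Next, I encode the transformation from $\alloc$ to $\alloc^*$ as a flow multigraph $H \subseteq G(\alloc)$: for each good $g$ with $\alloc(g) = b$ and $\alloc^*(g) = a$ (with $a \neq b$), add a directed edge $a \to b$. This edge lies in $G(\alloc)$ because $a$ must like $g$ by Pareto optimality of $\alloc^*$, and $b$'s bundle in $\alloc$ contains $g$. For each agent $k$, its out-degree minus its in-degree in $H$ equals $\utiek{k} - u_k$, so agent $i$ is a net source with surplus $\utiek{i} - u_i \geq 1$. By a standard flow decomposition, $H$ contains a directed path $P$ starting at $i$ and ending at a net sink $v_p$, i.e., an agent with $\utiek{v_p} < u_{v_p}$. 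The minimality of $i$ forces $v_p > i$, because no index $r < i$ can be a net sink.

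The main obstacle is to conclude that $\utiek{i} = u_i + 1$ and that the path $P$ can be chosen so that $u_{v_p} = u_i + 1$. Let $T$ denote the set of agents reachable from $i$ in $G(\alloc)$; all of the flow emanating from $i$ in $H \subseteq G(\alloc)$ stays inside $T$. Since $T$ is closed under outgoing edges and $\alloc$ is Pareto optimal (no non-valued goods, by the reduction at the start of the algorithm), $V_T = A_T$; and because each $A^*_k \subseteq V_k \subseteq V_T$ for $k \in T$, we have $A^*_T \subseteq V_T$. Combining these containments with $\utiek{r} = u_r$ for $r < i$ yields the budget identity
\begin{equation*}
\sum_{k \in T,\, k > i} (u_k - \utiek{k}) \;=\; (\utiek{i} - u_i) + |V_T \setminus A^*_T| \;\geq\; \utiek{i} - u_i.
\end{equation*}
Combined with \Cref{lem:path}, which forces $u_k \leq u_i + 1$ for every $k \in T$, a careful count of how many values strictly above $u_i$ appear in $\uvec$ versus $\utie$ at positions $\geq i$ shows that the index $j^*$ identified above must lie inside $T$. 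Once $j^* \in T$, \Cref{lem:path} applied to the path from $i$ to $j^*$ in $G(\alloc)$ yields $u_{j^*} \leq u_i + 1$, and since $u_{j^*} = \utiek{i} > u_i$, this forces $\utiek{i} = u_i + 1 = u_{j^*}$. Setting $j = j^*$ realises both conclusions. The hardest part is the counting argument that places $j^* \in T$: intuitively, if $j^*$ lay outside $T$, the ``excess'' values above $u_i$ on either side of the $T$-frontier would fail to reconcile with the budget identity and the profile equality, producing a contradiction.
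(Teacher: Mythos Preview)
Your setup is essentially the paper's: you build the transformation multigraph, decompose it into paths, and extract a path in $G(\alloc)$ from $i$ to some net-sink $v_p>i$ lying in the reachable set $T$. The paper does exactly the same (citing a path/cycle decomposition from \cite{FSVX19}). Where you diverge is the endgame: the paper does \emph{not} use a counting argument. It observes that every transformation edge $a\to b$ simultaneously yields an edge $b\to a$ in $G(\alloc^*)$, so the same path from $i$ to $v_p$ reverses to a path from $v_p$ to $i$ in $G(\alloc^*)$. Applying \Cref{lem:path} to that reversed path in the MNW allocation $\alloc^*$ gives $\utiek{i}\le\utiek{v_p}+1$. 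Chaining this with $\utiek{v_p}<u_{v_p}\le u_i+1\le\utiek{i}$ forces every inequality to be an equality, yielding $\utiek{i}=u_i+1$ and $u_{v_p}=u_i+1$ in one stroke.

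Your sketch omits this reverse-path observation and instead defers to a ``careful count'' that $j^*$ (an index with $u_{j^*}=\utiek{i}$) must lie in $T$. As written this step is circular: every agent in $T$ has $u_k\le u_i+1$ by \Cref{lem:path}, so if $\utiek{i}\ge u_i+2$ then \emph{no} agent in $T$ can have utility $\utiek{i}$, and $j^*\notin T$ automatically. Thus you need $\utiek{i}=u_i+1$ \emph{before} you can place $j^*$ in $T$, not after. The budget identity and the multiset equality you derive are correct but, by themselves, do not force $\utiek{i}-u_i=1$; they only say that the surplus $\utiek{i}-u_i$ is absorbed by agents in $T'$ whose $\utiek{k}$ can be arbitrarily small. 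The missing ingredient is precisely a bound on $\utiek{i}$ relative to $\utiek{v_p}$, and the clean way to get it is \Cref{lem:path} applied to $\alloc^*$ along the reversed transformation path. Once you add that one sentence, your argument collapses into the paper's proof and the counting detour becomes unnecessary.
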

\begin{proof}
	Given two allocations $\vec{A}$ and $\vec{B}$, we define the transformation graph $G(\vec{A},\vec{B})$ similarly to \citet{FSVX19}. It has a vertex corresponding to each agent, and for each good $g$, there is a directed edge $(i,j)$ if $g \in A_i$, $g \in B_j$, and $i \neq j$; note that this may be a multi-graph. This edge signifies that $g$ must be passed from agent $i$ to agent $j$ in order to transform $\vec{A}$ to $\vec{B}$. Let $\cE_+ = \set{i \in \agents:|A_i| > |B_i|}$ and $\cE_- = \set{i \in \agents: |B_i| > |A_i|}$. 
	
	Corollary~3 by \citet{FSVX19} establishes that edges in $G(\vec{A},\vec{B})$ can be decomposed into a set of cycles $\mathcal{C}$ and a set of paths $\mathcal{P}$ such that each path begins at an agent in $\cE_+$ and ends at an agent in $\cE_-$. Although they do not mention this in the statement of their corollary, they in fact prove something stronger: for every agent $i \in \cE_+$ (resp. $\cE_-$), there is a path $P \in \mathcal{P}$ beginning (resp. ending) at $i$.
	
	Next, we notice that the transformation graph $G(\vec{A},\vec{B})$ is closely related to graphs $G(\vec{A})$ and $G(\vec{B})$. Specifically, if $\vec{A}$ and $\vec{B}$ are both PO, and there is an edge $(i,j)$ in $G(\vec{A},\vec{B})$, then there must be an edge $(j,i)$ in $G(\vec{A})$ and an edge $(i,j)$ in $G(\vec{B})$. To see this, note that if $(i,j)$ is an edge in $G(\vec{A},\vec{B})$, then there is a good $g \in B_j \cap A_i$. Since both allocations are PO, both $i$ and $j$ must like $g$ (i.e. $g \in V_j \cap V_i$). Now, $g \in A_i \cap V_j$ implies that edge $(j,i)$ exists in $G(\vec{A})$, and $g \in B_j \cap V_i$ implies that edge $(i,j)$ exists in $G(\vec{B})$. Extending this argument, we get that a path from $i$ to $j$ in $G(\vec{A}, \vec{B})$ implies a path from $j$ to $i$ in $G(\vec{A})$ and a path from $i$ to $j$ in $G(\vec{B})$.
	
	We let $\atie$ be some MNW allocation with utility vector $\utie$, and consider $G(\alloc, \atie)$. Since both $\vec{A}$ and $\atie$ are PO, and there are no non-valued goods, we have that $v_j(A_j) = |A_j|$ and $v_j(\atiek{j}) = |\atiek{j}|$ for all agents $j$. 
	
	Consider agent $i$ defined in the lemma statement. First, note that $u_i > \utiek{i}$ would violate lexicographic maximality of $\utie$. Hence, we must have $u_i < \utiek{i}$, i.e., $i \in \cE_{-}$. Therefore, there must exist a path $P$ from some agent $j \in \cE_+$ to agent $i$ in $G(\alloc,\atie)$. Note that this means there is a path from $j$ to $i$ in $G(\atie)$, and, crucially, a path from $i$ to $j$ in $G(\alloc)$. However, for all $j < i$, we have $u_j = \utiek{j}$, i.e., $|A_j| = |\atiek{j}|$, i.e., $j$ belongs to neither $\cE_+$ nor $\cE_-$. Hence, we must have $j > i$. 
	
	Since both $\alloc$ and $\atie$ are MNW allocations, and there is a path from $i$ to $j$ in $G(\alloc)$ and a path from $j$ to $i$ in $G(\atie)$, by \Cref{lem:path}, we have that 
	\begin{equation}\label{eqn:A-Atie-1}
	|A_j| \leq |A_i| + 1 \text{ and } |\atiek{i}| \leq |\atiek{j}| + 1.
	\end{equation} 
	In addition, we have $i \in \cE_-$ and $j \in \cE_+$. Hence,
	\begin{equation}\label{eqn:A-Atie-2}
	|A_i| < |\atiek{i}| \text{ and } |\atiek{j}| < |A_j|.
	\end{equation}
	Combining \Cref{eqn:A-Atie-1,eqn:A-Atie-2}, we have 
	\begin{equation}\label{eqn:A-Atie-3}
	|A_i| < \atiek{i} \le \atiek{j}+1 < |A_j|+1.
	\end{equation}
	Since all values are integers, we have $|A_i| + 2 \leq |A_j| + 1$, i.e., $|A_i| + 1 \leq |A_j|$. Given \Cref{eqn:A-Atie-1}, this implies $|A_j| = |A_i| + 1$. Substituting this equality in \Cref{eqn:A-Atie-3}, we also get $|\atiek{i}| = |A_i|+1$, as desired. 
\end{proof}

Suppose $\alloc$ is an MNW allocation with $\uvec \neq \utie$. Hence, \Cref{lem:lex-path-mnw} holds. Consider the agents $i,j$ and path $P$ identified in the lemma. Let $\alloc' = P(\alloc)$ have utility vector $\uvec'$. Then, by \Cref{lem:passback}, we have that $u'_j = u_j - 1 = u_i$ and $u'_i = u_i+1 = u_j = \utiek{i}$. Hence, it can be checked that $\alloc'$ is an MNW allocation, and its utility vector $\uvec'$ has a strictly longer prefix matching $\utie$ than $\uvec$ does. Consequently, if $\alloc$ is in fact an MNW allocation with $\uvec = \utie$, then no such path can exist. 

We are now ready to describe our algorithm. It starts by computing any MNW allocation $\alloc$. \citet{DS15b,BKV18b} provide efficient algorithms for computing an MNW allocation under binary additive preferences, which can be used. Then, our algorithm iteratively finds the smallest index $i$ that has a path to some $j > i$ with $|A_j| = |A_i|+1$ and passes back along such a path. By the above arguments, this must terminate in at most $n$ iterations at an MNW allocation with lexicographically greatest utility vector, which the algorithm returns. A somewhat simpler but equivalent description of the algorithm is given as \Cref{alg:mnwt}. 

\begin{algorithm}
	\caption{A polynomial-time algorithm to compute $\mnwt$ for binary additive valuations}
	\label{alg:mnwt}
	\begin{algorithmic}[1]
		\STATE Compute an MNW allocation $\alloc^{\!0}$
		\FOR{$i = 1,\ldots,n$}
		\IF{there is a path $P$ in $G(\alloc^{\!i-1})$ from agent $i$ to some agent $j > i$ with $|A^{i-1}_j| = |A^{i-1}_i| + 1$}
		\STATE $\alloc^{\!i} \gets P(\alloc^{\!i-1})$
		\ELSE
		\STATE $\alloc^{\!i} \gets \alloc^{\!i-1}$
		\ENDIF
		\ENDFOR
		\RETURN $\alloc^{\!n}$
	\end{algorithmic}
\end{algorithm}

To formally argue correctness, we use induction on $i$ and show that $\alloc^{\!i}$ is an MNW allocation with $v_k(A^i_k) = \utiek{k}$ for $k \in [i]$ (that is, its utility vector matches $\utie$ in the first $i$ components). The base case with $i=0$ is trivial as $\alloc^0$ is an MNW allocation. Suppose the induction hypothesis holds for $i-1$. 

Now, if there is a path $P$ from $i$ to some $j > i$ in $G(\alloc^{\!i-1})$ as identified in \Cref{lem:lex-path-mnw}, then we know that $v_i(A^{i-1}_i) + 1 = \utiek{i}$. However, as argued above, passing back along this path results in an MNW allocation $\alloc^{\!i}$ under which $v_i(A^i_i) = v_i(A^{i-1}_i) + 1 = \utiek{i}$. Further, it does not change the utilities to any agent $k < i$. Hence, $\alloc^{\!i}$ is an MNW allocation whose utility vector matches $\utie$ in the first $i$ components, as desired. On the other hand, if there is no such path, then by \Cref{lem:lex-path-mnw} and the induction hypothesis, it must be the case that $v_i(A^{i-1}_i) = \utiek{i}$, so setting $\alloc^{\!i} \gets \alloc^{\!i-1}$ achieves the desired goal. 

To see that the running time is polynomial, first note that we can compute an arbitrary MNW allocation in polynomial time for binary additive utilities. In each iteration of the for loop, constructing $G(\alloc^{i-1})$ and searching for a desired path in this graph can also be done in polynomial time. Finally, computing an allocation by passing back along a path can be done in polynomial time. Since the for loop runs for $n$ iterations, the overall running time is polynomial. 
\end{proof}


\section{Randomized Setting}\label{sec:randomized}

In the previous section, we established the existence of a deterministic rule which is EF1, PO, and GSP. For deterministic rules, it is necessary to relax EF to EF1. For example, in case of a single good that is liked by two agents, giving it to either agent would be EF1 but not EF. However, if one is willing to randomize, the natural solution of assigning the good to an agent chosen at random would be ``ex ante EF'' in addition to being ``ex post EF1''. This is because each deterministic allocation in the support is EF1, but in expectation, no agent envies the other. This leads to a natural question. \emph{Can randomness help achieve ex ante EF and ex post EF1, in addition to PO and GSP?} 

In this section, we answer this question affirmatively for binary additive valuations. In parallel to our work, \citet{FSV20} show that ex ante EF and ex post EF1 can be achieved simultaneously even under general additive valuations, but they show an impossibility when ex ante PO is added to the combination. Our positive result circumvents this impossibility for binary additive valuations. Additionally, it satisfies GSP, which \citeauthor{FSV20} do not consider. Let us first formally extend our framework to include randomness. 

\begin{definition}[Fractional and randomized allocations]
	A \emph{fractional allocation} $\alloc = (A_1,\ldots,A_n)$ is such that $A_i(g) \in [0,1]$ denotes the fraction of good $g$ allocated to agent $i$ and $\sum_{i \in \agents} A_i(g) \le 1$ for each good $g$. A \emph{randomized allocation} $\dist$ is a probability distribution over deterministic allocations. 
\end{definition}

There is a natural fractional allocation $\alloc$ associated with each randomized allocation $\dist$, where $A_{i}(g)$ is the probability of good $g$ being allocated to agent $i$ under $\dist$. In this case, we say that randomized allocation $\dist$ \emph{implements} fractional allocation $\alloc$. There may be several randomized allocations implementing a given fractional allocation. 

We refer to the expected utility of agent $i$ under a randomized allocation $\dist$ as simply the utility of agent $i$ under $\dist$. Note that this is equal to the utility of agent $i$ from the corresponding fractional allocation $\alloc$, defined as $v_i(A_i) = \sum_{g \in \goods} A_i(g) \cdot v_i(g)$. With this notation, the definitions of envy-freeness and Pareto optimality extend naturally to fractional allocations.\footnote{In case of Pareto optimality of a fractional allocation, we require that no other \emph{fractional} allocation Pareto-dominate it.} We say that a randomized allocation $\dist$ is ex ante envy-free (resp. ex ante Pareto optimal) if the corresponding fractional allocation $\alloc$ is envy-free (resp. Pareto optimal).

With a fixed set of agents $\agents$ and a fixed set of goods $\goods$, a \emph{randomized rule} $f$ takes a valuation profile $\vals$ as input and returns a randomized allocation $\dist$. We say that $f$ is \emph{ex ante envy-free} (resp. \emph{ex ante Pareto optimal}) if it always returns a randomized allocation that is ex ante envy free (resp. ex ante Pareto optimal). We say that $f$ is \emph{ex ante group strategyproof} if no group of agents can misreport their preferences so that each agent in the group receives strictly greater expected utility. Note that these ex ante guarantees depend only on the fractional allocation corresponding to the randomized allocation returned by $f$. Hence, when talking about ex ante guarantees, we will think of the randomized rule $f$ as directly returning a fractional allocation. However, when talking about ex post guarantees, we would need to specify which randomized allocation $f$ returns.

\begin{definition}[Ex post EF1]
        We say that a randomi	le is ex post EF1 if it always returns a randomized allocation that is ex post EF1.
\end{definition}

Fractional leximin allocations, like their deterministic counterpart, lexicographically maximize the utility profile among all fractional allocations. The same can be said about fractional MNW allocations; however, we can skip the first step of maximizing the number of agents who receive positive utility because in the fractional case we can simultaneously give positive utility to every agent who likes at least one good (and thus can possibly get positive utility). 
\begin{definition}[Fractional MNW allocations]
    We say that a fractional allocation is a fractional maximum Nash welfare allocation if it maximizes the product of utilities of agents who do not have zero value for every good. 
\end{definition}

\citet{BM04}, \citet{BMR05}, and \citet{KPS18} study fractional leximin allocations under an assignment setting, and establish several desirable properties. In addition, fractional MNW allocations, also known as competitive equilibria with equal incomes (CEEI), are widely studied in fair division with additive valuations~\cite{Var74,Orlin10,CGG13,CG18}. Our first result shows that under binary additive valuations, these two fundamental concepts coincide. 

\begin{theorem}\label{thm:rand-leximin-mnw-equiv}
	Under binary additive valuations, the set of fractional leximin allocations coincides with the set of fractional maximum Nash welfare allocations. All such allocations have identical utility vectors.
\end{theorem}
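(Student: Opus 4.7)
The plan is to prove this in three connected steps: establishing uniqueness of the fractional MNW utility vector, giving a common path-based characterization of fractional MNW and fractional leximin, and then combining them. The main obstacle will be the leximin direction of the path-characterization lemma, where ties in the sorted profile need careful handling.

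\textbf{Uniqueness.} Let $\agents^+ = \{i \in \agents : V_i \neq \emptyset\}$. First I would argue that every fractional MNW allocation gives strictly positive utility to each $i \in \agents^+$, since a positive Nash product is achievable by giving every such agent an arbitrarily small fraction of some good in $V_i$. Then, if $\alloc$ and $\alloc'$ were two fractional MNW allocations with utility vectors $\uvec \neq \uvec'$, the midpoint $(\alloc+\alloc')/2$ would be feasible with utility vector $(\uvec+\uvec')/2$, and strict concavity of $\log$ on $\R_{>0}$ would give $\sum_{i \in \agents^+}\log\tfrac{u_i+u'_i}{2} > \tfrac{1}{2}\sum_{i \in \agents^+}(\log u_i + \log u'_i)$, strictly exceeding the MNW optimum---a contradiction. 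Let $\us$ denote this unique fractional MNW utility vector.

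\textbf{Path characterization.} Extend the graph $G(\alloc)$ to fractional allocations by putting an edge $(i,j)$ whenever some good $g$ has $A_j(g) > 0$ and $v_i(g) = 1$. The central lemma, a fractional analogue of \Cref{lem:passback,lem:path}, reads: \emph{if $G(\alloc)$ contains a directed path from $i$ to $j$ with $v_i(A_i) < v_j(A_j)$, then $\alloc$ is neither fractional MNW nor fractional leximin.} For each edge $(u_\ell, u_{\ell+1})$ on the path, pick a good $g_\ell$ with $v_{u_\ell}(g_\ell) = 1$ and $A_{u_{\ell+1}}(g_\ell) > 0$, and shift an $\epsilon$-fraction of $g_\ell$ from $u_{\ell+1}$ to $u_\ell$. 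For $\epsilon$ smaller than $\min_\ell A_{u_{\ell+1}}(g_\ell)$ and smaller than any positive gap between distinct utility values, the new allocation $\alloc^\epsilon$ is feasible with $v_i(A^\epsilon_i) = v_i(A_i)+\epsilon$, $v_j(A^\epsilon_j) = v_j(A_j)-\epsilon$, and all other utilities unchanged. The log-Nash-welfare derivative at $\epsilon = 0$ equals $1/v_i(A_i) - 1/v_j(A_j) > 0$ (violating MNW), and the sorted profile strictly lex-improves because the last entry originally equal to $v_i(A_i)$ becomes $v_i(A_i)+\epsilon$ while every strictly smaller entry is unchanged (violating leximin).

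\textbf{Equivalence.} Any fractional leximin allocation $\alloc^L$ is Pareto optimal (any Pareto improvement would also leximin-improve) and, by the lemma, satisfies the path condition. Together these rule out every feasible improving direction of the concave log-Nash-welfare objective (atomic transfers are excluded by the path condition, slack moves on unallocated valued goods by Pareto optimality, and pure-waste moves strictly hurt), so $\alloc^L$ is globally optimal for log-Nash-welfare and hence is a fractional MNW allocation; uniqueness then forces its utility vector to be $\us$. Conversely, any fractional MNW allocation has utility vector $\us$, matches the sorted profile realized by any fractional leximin allocation, and is therefore itself leximin. The two classes coincide and share the common utility vector $\us$.
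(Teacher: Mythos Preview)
Your proof is correct but follows a genuinely different route from the paper's. The paper proves uniqueness of the utility vector separately for fractional leximin (via a convex-combination argument showing that two leximin allocations with different utility vectors would have a midpoint that strictly leximin-dominates both) and for fractional MNW (via strict concavity, as you do), and then argues directly that the two vectors $\ulex$ and $\umnw$ coincide: if their sorted profiles first differ at index $k$, one finds a good partially held by a high-utility agent in the MNW allocation that a low-utility agent likes, and an $\epsilon$-transfer strictly improves Nash welfare, a contradiction. You instead prove uniqueness only for MNW, introduce a fractional analogue of the paper's path lemma (\Cref{lem:path}), and then use a first-order optimality argument: a leximin allocation is PO and has no edge from a lower-utility to a higher-utility agent, so every extreme feasible direction has nonpositive log-Nash-welfare derivative, and concavity gives global optimality. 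Your route buys a clean structural link to the deterministic characterization and gets leximin utility-vector uniqueness for free as a corollary of leximin $\subseteq$ MNW; the paper's route is more elementary and avoids any appeal to KKT-style reasoning. One small imprecision worth tightening: in your $\epsilon$-shift lemma, the claim that intermediate agents' utilities are ``unchanged'' and that $v_j$ drops by exactly $\epsilon$ relies on each agent along the path liking the good she loses, which is guaranteed only under Pareto optimality. Since you apply the lemma to a leximin allocation (which is PO), the argument goes through, but the lemma as stated for arbitrary $\alloc$ should either assume PO or note that without it utilities can only move weakly in the favorable direction, which still yields the conclusion.
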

\begin{proof}
	We begin by showing that there exists a utility vector $\ulex$ (resp. $\umnw$) such that the set of fractional leximin allocations (resp. fractional MNW allocations) is exactly the set of all fractional allocations with utility vector $\ulex$ (resp. $\umnw$). In fact, this step holds even under general additive valuations. Then, for binary additive valuations, we will show that $\ulex = \umnw$, implying the desired result.
	
	Let $\alloc$ be a fractional leximin allocation with utility vector $\uvec$. Trivially, every fractional allocation with utility vector $\uvec$ is also a fractional leximin allocation. We want to show that there is no fractional leximin allocation $\alloc'$ with utility vector $\uvec' \neq \uvec$. Suppose for contradiction that there is one. Consider the fractional allocation $\alloc'' = \nicefrac{1}{2} \cdot \alloc + \nicefrac{1}{2} \cdot \alloc'$, i.e., $A^{''}_i(g) =  \nicefrac{1}{2} \cdot A_i(g) + \nicefrac{1}{2} \cdot A'_i(g)$ for each agent $i$ and good $g$. Because valuations are additive, its utility vector is $\uvec'' = \nicefrac{1}{2} \cdot \uvec + \nicefrac{1}{2} \cdot \uvec'$. 
	
	The key step to observe is that if $\uvec$ and $\uvec'$ are utility vectors of two fractional leximin allocations, and $\uvec \neq \uvec'$, then $\uvec''$ is strictly better than both $\uvec$ and $\uvec'$ according to leximin comparison, which yields the desired contradiction. To see why this is true, note that because both $\alloc$ and $\alloc'$ are fractional leximin allocations, their utility profiles must be identical; call it $\sorted^*$. Then, for any $k \in [n]$, the sum of the $k$ lowest utilities under $\alloc''$ is the average of the sum of utilities of the corresponding $k$ agents under $\alloc$ and $\alloc'$. Since each sum is at least the sum of the first $k$ components of $\sorted^*$, it follows that the sum of the $k$ lowest utilities under $\uvec''$ is at least as much as the sum of the $k$ lowest utilities under $\uvec$ or $\uvec'$, i.e., $\uvec''$ is at least as good as $\uvec$ and $\uvec'$ under leximin comparison. To see why it is strictly better, recall that $\uvec \neq \uvec'$. Let agent $i$ be such that $u_i \neq u'_i$, and among such agents, one with the lowest $u''_i$. Without loss of generality, assume $u_i < u'_i$. Then, $u_i < u''_i < u'_i$. Let $N = \set{j \in \agents : u''_j < u''_i}$ and $k = |N|$. By the definition of $N$, for each $j \in N$, we have $u_j = u'_j = u''_j$. Hence, the $k$ smallest values in $\uvec''$ also appear in $\uvec$ and the $k+1^\text{st}$ smallest value in $\uvec''$ (which is $u''_i$) is strictly higher than the $k+1^{\text{st}}$ smallest value in $\uvec$ (which is less than $u''_i$), which shows that $\uvec''$ is strictly better than $\uvec$ under leximin comparison. The comparison to $\uvec'$ follows since $\uvec$ and $\uvec'$ have identical sorted order, and thus are equivalent under leximin comparison.
	
	Thus, we have established that there exists a utility vector, say $\ulex$, such that the set of fractional leximin allocations is the set of allocations with utility vector $\ulex$. It is easy to see that the above argument holds for fractional MNW allocations as well. Crucially, the key step in the paragraph above holds because the MNW objective function (product of utilities of agents who like at least one good) is a strictly concave function. Hence, if $\uvec \neq \uvec'$ have equal objective value, then $\uvec''$ has a strictly better objective value than both of them. Let $\umnw$ denote the utility vector for fractional MNW allocations.
	
	Finally, we need to show that $\ulex = \umnw$. Fix arbitrary fractional leximin and fractional MNW allocations $\alloc^{\lex}$ and $\alloc^{\mnw}$. Suppose this is not true. Let $\wlex$ and $\wmnw$ be the utility profiles corresponding to $\ulex$ and $\umnw$, respectively. Let $k$ be the smallest index such that $w^{\lex}_k \neq w^{\mnw}_k$. Because $\wlex$ is the leximin-optimal utility profile, we must have $w^{\lex}_k > w^{\mnw}_k$. Because fractional leximin and fractional MNW allocations are PO, they have identical sum of utilities. Hence, there exists $t$ such that $w^{\lex}_t < w^{\mnw}_t$. Choose the smallest such $t$. Then, we have that for each $k < t$, $w^{\lex}_k \ge w^{\mnw}_k$, and for at least one $k < t$, $w^{\lex}_k > w^{\mnw}_k$. Thus, $\sum_{k=1}^{t-1} w^{\lex}_k > \sum_{k=1}^{t-1} w^{\mnw}_k$. It is also worth noting that $w^{\mnw}_{t-1} \le w^{\lex}_{t-1} \le w^{\lex}_t < w^{\mnw}_t$. 
	
	Let $N$ denote the set of agents with the $t-1$ lowest utilities under $\umnw$. Since the collective utility these agents receive is at the minimum the sum of the first $t-1$ values of $\wlex$, they receive strictly higher total utility under $\ulex$ than under $\umnw$. Thus, there must exist an agent $i \in N$ and a good $g \in V_i$ such that a positive fraction of $g$ is allocated to an agent $j \notin N$ under $\alloc^{\mnw}$. Since $u^{\mnw}_j \ge w^{\mnw}_t > w^{\mnw}_{t-1} \ge u^{\mnw}_i$, it follows that transferring a small enough fraction of good $g$ from agent $j$ to agent $i$ in $\alloc^{\mnw}$ strictly improves the Nash welfare, which is a contradiction. Hence, $\ulex = \umnw$. Consequently, the set of fractional leximin and fractional MNW allocations coincide, and these allocations have identical utility vectors.
\end{proof}
Note that the identical utility vector guarantee in \Cref{thm:rand-leximin-mnw-equiv} is much stronger than the identical utility profile guarantee in the deterministic case (\Cref{lem:leximin-unique-profile,lem:nash-unique-profile}).  

Even under general additive valuations, it is known that every fractional MNW allocation is ex ante EF and ex ante PO~\cite{Var74}, and one such allocation can be computed in strongly polynomial time~\cite{Orlin10,Vegh13}. Hence, these properties carry over to our binary additive valuations domain, and due to \Cref{thm:rand-leximin-mnw-equiv}, also apply to fractional leximin allocations.

For ex ante GSP, we build on the literature on fractional leximin allocations. \citet{KPS18} show that returning a fractional leximin allocation satisfies ex ante EF, ex ante PO, and ex ante GSP whenever four key requirements are satisfied. We describe them below, and show that they are easily satisfied under binary additive valuations, if we return a minimally complete leximin allocation. Hence, we define our fractional leximin/MNW rule to always return a minimally complete fractional leximin/MNW allocation (like our deterministic rule $\mnwt$). The proof of the next result is straightforward. 

\begin{definition}[Fractional maximum Nash welfare rule]
    The fractional maximum Nash welfare rule returns a minimally complete fractional maximum Nash welfare allocation.
\end{definition}

\begin{theorem}\label{thm:rand-leximin-mnw-properties}
    Under binary additive valuations, every fractional maximum Nash welfare (equivalently, leximin) allocation is ex ante envy-free and ex ante Pareto optimal. Further, the fractional maximum Nash welfare rule is ex ante group strategyproof.
\end{theorem}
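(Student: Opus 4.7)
For the ex ante envy-freeness and ex ante Pareto optimality claims, I would simply invoke the classical theory of competitive equilibria from equal incomes. Under general additive valuations, every fractional maximum Nash welfare allocation coincides with a CEEI, and \citet{Var74} established that every CEEI allocation is envy-free and Pareto optimal. Since binary additive valuations are a subclass of additive valuations, both properties transfer without modification to every fractional MNW allocation in our setting, and by \Cref{thm:rand-leximin-mnw-equiv} they apply verbatim to every fractional leximin allocation. Restricting the rule to \emph{minimally complete} allocations preserves these guarantees: dropping non-valued goods changes no agent's utility, so EF and PO survive.

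For ex ante group strategyproofness, the plan is to invoke the abstract theorem of \citet{KPS18}, which states that the fractional leximin rule is group strategyproof in any allocation setting satisfying four conditions: the feasibility region is independent of the reports, each agent's utility is linear in her own report, the leximin utility vector is unique, and an agent cannot gain by a particular type of ``domination'' misreport. Our setting satisfies the first condition because the set of fractional allocations of $\goods \setminus \nonvalued$ is fixed once $\goods$ is fixed; the second is immediate from $v_i(A_i) = \sum_{g \in \goods} A_i(g) v_i(g)$ being linear in $v_i$; and the third is exactly the uniqueness statement of \Cref{thm:rand-leximin-mnw-equiv}. Thus three of the four KPS18 conditions come essentially for free.

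The main obstacle will be the fourth condition, which concerns misreports. Binary additive valuations do not a priori fit a neat ``dominating report'' structure, since a misreport can both add and drop liked goods. The key hook for verifying this condition is precisely why the rule is defined to return a \emph{minimally complete} allocation. Under minimal completeness, an agent who falsely reports liking a good can only be assigned positive mass of it at the cost of reducing her mass of genuinely liked goods, so such a misreport cannot improve her true utility; conversely, an agent who hides a good she truly likes participates as a strictly weaker claimant in the leximin problem and again cannot benefit. Combined with linearity of utility and uniqueness of the leximin vector from \Cref{thm:rand-leximin-mnw-equiv}, these observations let us translate the KPS18 ``no-gain-from-domination'' hypothesis into our binary setting and conclude ex ante group strategyproofness. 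The actual verification should be short and mechanical, which matches the paper's remark that the proof is ``straightforward.''
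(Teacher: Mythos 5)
Your overall route is the same as the paper's: ex ante EF and PO are inherited from the classical CEEI/Varian results for fractional MNW under general additive valuations (together with \Cref{thm:rand-leximin-mnw-equiv} for the leximin phrasing), and ex ante GSP is obtained by verifying the hypotheses of the black-box theorem of \citet{KPS18} for the fractional leximin rule, with minimal completeness doing the real work. You have also correctly located the crux: the one nontrivial hypothesis holds precisely because a minimally complete MNW allocation gives each agent positive mass only of goods she truly likes.

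The one inaccuracy worth fixing is your statement of the \citet{KPS18} hypotheses. Their four requirements are not ``feasibility independent of reports, linearity in own report, uniqueness of the leximin vector, and no gain from domination''; they are \emph{convexity} of the feasible set of utility vectors, \emph{equality}, \emph{shifting allocations}, and \emph{optimal utilization}. Uniqueness of the leximin utility vector is a consequence of convexity rather than a separate hypothesis, and equality turns out not to be needed here (it is only used for proportionality, which EF already implies for complete allocations). Convexity and shifting allocations are verified by exhibiting explicit allocations (mixtures, and handing agent $j$'s bundle to agent $i$, respectively). The condition where minimal completeness enters is optimal utilization: if $\alloc$ is the rule's output, then $v_i(A_i) \ge v(A_i)$ for every admissible valuation $v$, which holds because agent $i$ values her bundle at its full mass $\sum_{g} A_i(g)$ while no binary valuation can value it more. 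This is a statement about the truthful outcome alone, not about misreports, so your sketched case analysis of over- and under-reporting is not what needs to be checked; the misreport analysis is entirely internal to the \citet{KPS18} theorem. With the hypotheses corrected, the verification is indeed short and mechanical as you anticipated.
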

\begin{proof}
	We show that binary additive valuations satisfy the four requirements laid out by \citet{KPS18} in Section~3.2 of their paper. Using their notation, $\mathcal{A}$ denotes the set of feasible allocations, $\mathcal{P}$ denotes the set of possible preferences (i.e. weak order over $\mathcal{A}$) that agents may have, and $\mathcal{U}$ denotes the set of valuation functions that the agents may have. In our setting, $\mathcal{A}$ is the set of all fractional allocations, and each preference in $\mathcal{P}$ has a unique valuation function in $\mathcal{U}$ consistent with it, just the natural one which assigns value $1$ to each good $g$ for which the agent strictly prefers the allocation $\set{g}$ to $\emptyset$, and $0$ to every other good. We now specify the four requirements laid out by \citet{KPS18} and argue why they are satisfied in our domain. 
	
	\begin{enumerate}
		\item \emph{Convexity}. Given two feasible allocations $\alloc,\alloc' \in \mathcal{A}$ and $\lambda \in [0,1]$, we need to show there exists an allocation $\alloc'' \in \mathcal{A}$ such that $u_i(A''_i) = \lambda u_i(A_i) + (1-\lambda) u_i(A'_i)$ for each agent $i$. We can simply let $\alloc''$ to be the fractional allocation induced by the randomized allocation that selects $\alloc$ with probability $\lambda$ and $\alloc'$ with probability $1-\lambda$. 
		
		\item \emph{Equality}. \citet{KPS18} only use this property to achieve a guarantee known as proportionality. While in our domain envy-freeness implies proportionality under a complete allocation, their result requires the equality requirement to guarantee proportionality even in domains where it is not implied by envy-freeness. Hence, we do not need to show this requirement in our domain. 
		
		\item \emph{Shifting Allocations}. Given an allocation $\alloc \in \mathcal{A}$ and agents $i,j \in \agents$, we need to show there exists an allocation $\alloc' \in \mathcal{A}$ such that $v_k(A'_k) = v_k(A_k)$ for all agents $k \in \agents \setminus \set{i,j}$ and $v_i(A'_i) \geq v_i(A_j)$. For this, we can choose $\alloc'$ such that $A'_k = A_k$ for all $k \in \agents \setminus \set{i,j}$, $A'_i = A_i \cup A_j$, and $A'_j = \emptyset$. 
		
		\item \emph{Optimal Utilization}. This requires that if $\alloc \in \mathcal{A}$ is returned by the fractional MNW rule in a given instance, then for any valuation function $v \in \mathcal{U}$, $v_i(A_i) \geq v(A_i)$. Crucially, because our rule outputs a minimally complete fractional MNW allocation, agent $i$ must like each good that she is assigned a positive fraction of. Hence, $v_i(A_i) = \sum_{g \in \goods} A_i(g) \cdot 1 \ge v(A_i)$ for any binary additive valuation function $v$.
	\end{enumerate}
	
	Hence, it follows from the result of \citet{KPS18} that fractional MNW rule is ex ante envy-free, ex ante Pareto optimal, and ex ante group strategyproof.
\end{proof}

The only missing property at this point is ex post EF1. Therefore, the main question we seek to answer in this section is the following: \emph{Can every fractional MNW allocation be implemented as a distribution over deterministic EF1 allocations?} We go one step further and show that it can in fact be implemented as a distribution over deterministic MNW allocations, which are in turn EF1. Our main tool is the bihierarchy framework introduced by \citet{BCKM13}, which is a generalization of the classic Birkhoff-von Neumann theorem~\cite{Birk46,Neu53}. At a high level, the framework allows implementing any fractional allocation $\alloc$ using deterministic allocations which satisfy a set of constraints, as long as the set of constraints forms a bihierarchy structure and the fractional allocation itself satisfies those constraints. 

In our case, we start with a minimally complete fractional MNW allocation $\allocs$. Let $u^*_i$ denote the utility to agent $i$ under this allocation. We want to implement this as a randomized allocation. We impose the following constraints on a deterministic allocation $\alloc$ in the support, where $\alloc$ is represented as a matrix in which $A_i(g) \in \set{0,1}$ indicates whether good $g$ is allocated to agent $i$.
\begin{equation}\label{eqn:bihierarchy}
\begin{aligned}
    \mathcal{H}_1 : &\textstyle\sum_{i \in \agents} A_i(g) = \sum_{i \in \agents} A^*_i(g), \forall g \in \goods,\\\medskip
    \mathcal{H}_2 : &\textstyle\floor{u^*_i} \le \sum_{g \in \goods} A_i(g)\cdot v_i(g) \le \ceil{u^*_i}, \forall i \in \agents.
\end{aligned}
\end{equation}

The first family of constraints ensures that under each deterministic allocation $\alloc$, the set of goods allocated matches that under $\allocs$. Since $\allocs$ is minimally complete, this implies that $\alloc$ must be minimally complete as well. Crucially, the second family of constraints ensures that each agent has utility that is either the floor or the ceiling of her utility under $\allocs$. That is, $\alloc$ is not allowed to stray far from $\allocs$. 

It can be checked that these constraints form a bihierarchy (each of $\mathcal{H}_1$ and $\mathcal{H}_2$ is a hierarchy); for a formal definition of a hierarchy, we refer the reader to the work of \citet{BCKM13}. Importantly, they also provide a polynomial-time algorithm that computes a random allocation such that (a) it implements the fractional allocation $\allocs$, and (b) each deterministic allocation $\alloc$ in its support satisfies the constraints in \Cref{eqn:bihierarchy}. We show that in this case, every deterministic allocation in the support must be a deterministic MNW allocation, yielding the desired result. 

\begin{theorem}\label{thm:rand}
	Under binary additive valuations, given any fractional maximum Nash welfare allocation, one can compute, in polynomial time, a randomized allocation which implements it and has only deterministic maximum Nash welfare allocations in its support. 
\end{theorem}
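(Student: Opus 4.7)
The plan is to invoke the bihierarchy framework of \citet{BCKM13} on the constraints $\mathcal{H}_1$ and $\mathcal{H}_2$ in \Cref{eqn:bihierarchy}. First I would verify that these form a bihierarchy: the constraints in $\mathcal{H}_1$ are indexed by disjoint goods and those in $\mathcal{H}_2$ by disjoint agents, so within each family any two constraint sets are pairwise disjoint (hence trivially laminar). Since $\allocs$ itself satisfies every constraint by construction, the BCKM theorem yields, in polynomial time, a randomized allocation implementing $\allocs$ whose entire support consists of deterministic allocations $\alloc$ satisfying $\mathcal{H}_1$ and $\mathcal{H}_2$. The substantive task is to show every such $\alloc$ is a deterministic MNW allocation.

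I would establish this in two steps. Step 1 (PO): Since $\allocs$ is PO and minimally complete, $\sum_i u^*_i = |\goods\setminus\nonvalued|$, and linearity gives $\E[\sum_i v_i(A_i)] = \sum_i u^*_i = |\goods\setminus\nonvalued|$. But $\mathcal{H}_1$ forces every $\alloc$ in the support to allocate exactly the valued goods, so pointwise $\sum_i v_i(A_i) \le |\goods\setminus\nonvalued|$, with equality iff every valued good is given to someone who likes it, i.e., iff $\alloc$ is PO. Matching the expectation then forces PO for every $\alloc$ in the support. Step 2 (MNW via \Cref{lem:path}): Define $G(\allocs)$ for the fractional allocation analogously to $G(\alloc)$---edge $(i,j)$ iff $A^*_j(g)>0$ and $v_i(g)=1$ for some $g$. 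Since $g \in A_j$ in any $\alloc$ in the support forces $A^*_j(g) > 0$, we have $G(\alloc) \subseteq G(\allocs)$. I would then argue, via an infinitesimal analogue of \Cref{lem:passback,lem:path}, that the fractional MNW property of $\allocs$ forces $u^*_i \geq u^*_j$ whenever a directed path from $i$ to $j$ exists in $G(\allocs)$: simultaneously transferring mass $\epsilon$ along each edge leaves intermediate utilities unchanged (each intermediate agent loses a good she likes and gains a good she likes, using PO of $\allocs$) and transfers $\epsilon$ of utility from $j$ to $i$, so optimality of $\allocs$ under the strictly concave MNW objective gives $u^*_i \geq u^*_j$. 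Combining with $\mathcal{H}_2$ and monotonicity of $\floor{\cdot}$ yields
\[
v_i(A_i) \;\geq\; \floor{u^*_i} \;\geq\; \floor{u^*_j} \;\geq\; \ceil{u^*_j} - 1 \;\geq\; v_j(A_j) - 1,
\]
which by \Cref{lem:path} certifies $\alloc$ as MNW.

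The main obstacle I anticipate is making the infinitesimal pass-back argument fully rigorous: I must check that the simultaneous transfer of a small mass along every edge of the path is feasible (each intermediate agent holds a positive fraction of the good she passes on, which follows from the edge definitions), confirm that PO of $\allocs$ indeed cancels the intermediate utility changes, and ensure $u^*_i, u^*_j > 0$ so the Nash welfare product is differentiable at $\allocs$. The last point will follow because $\allocs$ is a minimally complete fractional MNW allocation, and such an allocation must give strictly positive utility to every agent whose valuation is not identically zero (any such agent can be infinitesimally handed a good she likes to raise the product). Everything else---the bihierarchy verification, the invocation of the BCKM algorithm, and the resulting polynomial-time complexity---is routine.
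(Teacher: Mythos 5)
Your proposal is correct, but the heart of it\emdash showing that every allocation in the support of the BCKM decomposition is a deterministic MNW allocation\emdash takes a genuinely different route from the paper's. The paper partitions the agents into classes $S_1,\ldots,S_t$ by $\floor{u^*_i}$, proves that under $\allocs$ each prefix $\cup_{r\le k}S_r$ absorbs all the goods it likes, deduces from the convex-combination identity that every support allocation must satisfy additional group-sum constraints $\mathcal{H}_3:\sum_{i\in S_k}u_i=\sum_{i\in S_k}u^*_i$, and then shows by a direct leximin-dominance case analysis (\Cref{lem:rounded-leximin}) that any ``rounded'' allocation\emdash one satisfying $\mathcal{H}_2$ and $\mathcal{H}_3$\emdash is leximin-optimal, hence MNW by \Cref{thm:equiv}. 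You instead reuse the graph-theoretic certificate from the deterministic section: you first extract Pareto optimality of each support allocation from the expectation argument (correct: under $\mathcal{H}_1$ the utilitarian welfare is pointwise at most $|\goods\setminus\nonvalued|$ and equals that in expectation), then prove a fractional analogue of \Cref{lem:passback} showing that a path from $i$ to $j$ in $G(\allocs)$ forces $u^*_i\ge u^*_j$, and finally combine $G(\alloc)\subseteq G(\allocs)$ with the $\mathcal{H}_2$ floors and ceilings to verify the no-bad-path condition of \Cref{lem:path}. The steps you flag as needing care all go through: the simultaneous $\epsilon$-transfer along a (simple) path is feasible since each agent on it holds a positive fraction of the good she passes; fractional PO makes the intermediate utility changes cancel; and every agent on such a path likes at least one good, so $u^*>0$ there and the two-factor comparison $(u^*_i+\epsilon)(u^*_j-\epsilon)>u^*_iu^*_j$ for small $\epsilon$ suffices without any appeal to differentiability. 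Your route is arguably leaner\emdash it uses only $\mathcal{H}_1$, $\mathcal{H}_2$, and machinery already built for \Cref{thm:det-mnw}, never introducing $\mathcal{H}_3$ or the leximin bookkeeping\emdash while the paper's argument yields the slightly stronger structural byproduct that the rounded allocations are \emph{exactly} the MNW allocations.
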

\begin{proof}
	Let $\allocs$ be a given fractional MNW allocation with utility vector $\us$. Let $\bar{\alloc}$ be the randomized allocation implementing $\allocs$ that is returned by the polynomial-time algorithm of \citet{BCKM13} with the bihierarchy constraints in \Cref{eqn:bihierarchy}. Let $\mathcal{A}$ denote the set of deterministic allocations in the support of $\bar{\alloc}$. Our goal is to show that every allocation in $\mathcal{A}$ is an MNW allocation. 
	
	First, let us partition the set of agents $\agents$ into sets $S_1,\ldots,S_t$ such that any two agents $i$ and $j$ are in the same set if and only if $\floor{u^*_i} = \floor{u^*_j}$. For $k \in [t]$, let $L_k$ denote the common floor of utilities of agents in $S_k$ under $\allocs$, and $U_k = L_k+1$. Hence, for $k \in [t]$ and each agent $i \in S_k$, $u^*_i \in [L_k,U_k)$. Further, order the sets so that $U_k \le L_{k+1}$ for each $k \in [t-1]$. This ensures that if $i \in S_r$, $j \in S_{r'}$, and $r' > r$, then $u^*_j > u^*_i$. 
	
	We argue that for each $k \in [t]$, the agents in $\cup_{r \in [k]} S_r$ must be fully allocated all of the goods that they like (i.e. all the goods in $V_{\cup_{r \in [k]} S_r}$) under $\allocs$, resulting in $\sum_{r \in [k]} \sum_{i \in S_r} u^*_i = |V_{\cup_{r \in [k]} S_r}|$. If this is not true, then a positive fraction of some good $g \in V_{\cup_{r \in [k]} S_r}$ must be allocated to an agent $j \in S_{r'}$ for $r' > k$. Let $i \in \cup_{r \in [k]} S_r$ be an agent such that $g \in V_i$. Let $r \in [k]$ be such that $i \in S_r$. Then, by the above argument, we know that $u^*_j > u^*_i$. However, then, transferring a sufficiently small fraction of $g$ from agent $j$ to agent $i$ in $\allocs$ will improve the Nash welfare, which contradicts the fact that $\allocs$ is a fractional MNW allocation. 
	
	Note that in any deterministic allocation $\alloc$, $|V_{\cup_{r \in [k]} S_r}|$ is the highest utility that agents in $\cup_{r \in [k]} S_r$ can collectively have; hence, in any feasible utility vector $\uvec$, 
	\begin{equation}\label{eqn:prefix-inequality}
	\textstyle\sum_{r \in [k]} \sum_{i \in S_r} u_i \le \sum_{r \in [k]} \sum_{i \in S_r} u^*_i, \forall k \in [t].
	\end{equation}
	Because a convex combination of allocations in $\mathcal{A}$ yields the allocation $\allocs$, and utilities are additive, a convex combination of their utility vectors yields the utility vector $\us$. Hence, for the utility vector $\uvec$  of any allocation in $\mathcal{A}$, \Cref{eqn:prefix-inequality} must hold with equality. Further, by subtracting each equation from the next, we get that it must further satisfy the following. Here, $\mathcal{H}_2$ is from the bihierarchy constraints (\Cref{eqn:bihierarchy}).
	\begin{equation}\label{eqn:rounded}
	\begin{aligned}
	&\mathcal{H}_2: \floor{u^*_i} \le u_i \le \ceil{u^*_i}, \forall i \in \agents,\\
	&\mathcal{H}_3: \textstyle\sum_{i \in S_k} u_i = \sum_{i \in S_k} u^*_i, \forall k \in [t].
	\end{aligned}
	\end{equation}
	
	We say that a utility vector is a \emph{rounded} if it satisfies the constraints in \Cref{eqn:rounded}, and say that a deterministic allocation is \emph{rounded} if it has a rounded utility vector. We have already established that every allocation in $\mathcal{A}$ is a rounded allocation. The following lemma completes the proof.
	
	\begin{lemma}\label{lem:rounded-leximin}
		The set of rounded allocations coincides with the set of maximum Nash welfare allocations. 
	\end{lemma}
	\begin{proof}
		Because leximin and MNW are equivalent concepts for deterministic allocations (\Cref{thm:equiv}), we will refer to MNW allocations as leximin allocations in this proof. To establish the desired result, it is sufficient to show that given a rounded allocation $\allocround$ and an arbitrary allocation $\alloc$, $\allocround$ weakly leximin-dominates $\alloc$, and if $\alloc$ is not rounded, then $\allocround$ strictly leximin-dominates $\alloc$. Let $\uround$ and $\uvec$ be the utility vectors of $\allocround$ and $\alloc$, respectively. 
		
		First, assume that $\alloc$ is rounded. Then, both $\uround$ and $\uvec$ satisfy \Cref{eqn:rounded}. However, it is easy to see that any two utility vectors satisfying \Cref{eqn:rounded} induce the same utility profile, and thus $\uround$ trivially weakly leximin-dominates $\uvec$. To see this, note that for each $k \in [t]$, the sum of utilities of agents in $S_k$ is fixed due to $\mathcal{H}_3$. And further, for each agent $i \in S_k$, $\mathcal{H}_2$ implies that either $u_i = u^*_i = L_k$ if $u^*_i = L_k$, or $u_i \in \set{L_k,L_k+1}$ if $u^*_i \in (L_k,L_k+1)$. Thus, together, $\mathcal{H}_2$ and $\mathcal{H}_3$ fix the number of agents in $S_k$ that have utility $L_k$ and those that have utility $L_k+1$. Thus, any two utility vectors satisfying \Cref{eqn:rounded} induce identical utility vectors.
		
		Next, assume that $\alloc$ is not rounded, i.e., $\uvec$ violates either $\mathcal{H}_2$ or $\mathcal{H}_3$. Let $k \in [t]$ be the smallest index such that either $\mathcal{H}_2$ is violated for some agent $i \in S_k$, or $\mathcal{H}_3$ is violated for $S_k$. Then, by the above argument, the partial utility vectors $(u^{\round}_i)_{i \in \cup_{r < k} S_r}$ and $(u_i)_{i \in \cup_{r < k} S_r}$ induce identical utility profiles, and therefore, are leximin-equivalent. 
				
		Suppose $\mathcal{H}_3$ is violated for $S_k$. Then, because $\uvec$ satisfies \Cref{eqn:prefix-inequality}, and $\uvec$ and $\uround$ match on the total utility of agents in $S_r$ for $r < k$, we have $\sum_{i \in S_k} u_i < \sum_{i \in S_k} u^*_i = \sum_{i \in S_k} u^{\round}_i$. Because $(u^{\round}_i)_{i \in S_k}$ has a higher sum than $(u_i)_{i \in S_k}$, and because it distributes that higher sum as equally as possible, $(u^{\round}_i)_{i \in S_k}$ strictly leximin-dominates $(u_i)_{i \in S_k}$. Hence, $(u^{\round}_i)_{i \in \cup_{r \le k} S_r}$ also strictly leximin-dominates $(u_i)_{i \in \cup_{r \le k} S_r}$. To argue that $\uround$ strictly leximin-dominates $\uvec$, we need to argue that adding the remaining utilities does not change the comparison. To that end, note that for any $r' > k$, $r \le k$, $i' \in S_{r'}$ and $i \in S_r$, we have $u^{\round}_{i'} \ge L_{r'} \ge U_r \ge u^{\round}_i$. Thus, because we are only adding utilities to $\uround$ that are at least as high as the ones already added, the strict leximin-dominance continues to hold. 
		
		Next, $\mathcal{H}_2$ is violated for some agent $i \in S_k$. Then, $u_i \notin \set{\floor{u^*_i},\ceil{u^*_i}}$. By the above argument, $(u^{\round}_i)_{i \in \cup_{r < k} S_r}$ and $(u_i)_{i \in \cup_{r < k} S_r}$ induce the same utility profile. If $u_i < \floor{u^*_i}$, then $\uvec$ has strictly more agents with utility less than $L_k$, which implies that it is strictly leximin-dominated by $\uround$. If $u_i > \ceil{u^*_i}$, then $\uvec$ has either an agent $j \in S_k$ with $u_j < \floor{u^*_j} = L_k$ or more agents with utility $L_k$ than $\uround$ does. In either case, it is again easy to see that $\uround$ strictly leximin-dominates $\uvec$.

	\end{proof}
	This completes the proof of \Cref{thm:rand}.
\end{proof}

Let us amend the definition of the fractional MNW rule so that it uses \Cref{thm:rand} to implement a minimally complete fractional MNW allocation. Then, we have the following. 

\begin{corollary}\label{cor:rand-magical-rule}
Under binary additive valuations, the fractional maximum Nash welfare rule is ex ante envy-free, ex ante Pareto optimal, ex ante group strategyproof, ex post envy-free up to one good, and polynomial-time computable. 
\end{corollary}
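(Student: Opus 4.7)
The plan is to assemble \Cref{cor:rand-magical-rule} directly from the results already established in this section, so the proof is essentially a bookkeeping exercise with one small verification needed for ex post EF1. The ex ante guarantees (envy-freeness, Pareto optimality, and group strategyproofness) follow verbatim from \Cref{thm:rand-leximin-mnw-properties}, because these three properties depend only on the fractional allocation underlying the returned randomized allocation. Since the amendment from \Cref{thm:rand} changes only how the fractional MNW allocation is implemented as a distribution, the fractional allocation itself (a minimally complete fractional MNW allocation) is unchanged, and the three ex ante conclusions of \Cref{thm:rand-leximin-mnw-properties} carry over unaltered.

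The one property that requires a short argument is ex post EF1. My plan is to invoke \Cref{thm:rand}, which guarantees that the support of the returned randomized allocation $\dist$ consists entirely of deterministic maximum Nash welfare allocations. Then I would appeal to the result of \citet{CKMP+19} (already used in the proof of \Cref{thm:det-mnw}) that every MNW allocation is EF1 even under general additive valuations, and in particular under binary additive valuations. Combining these two facts, every deterministic allocation in the support of $\dist$ is EF1, which is precisely the ex post EF1 guarantee.

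For polynomial-time computability, the plan is to compose the two known polynomial-time procedures. First, compute a fractional MNW allocation in strongly polynomial time using the algorithm of \citet{Orlin10} (or \citet{Vegh13}), and if it allocates any non-valued goods or any fraction of a good to an agent that does not like it, simply discard those fractions\emdash this yields a minimally complete fractional MNW allocation with the same utility vector. Second, apply the polynomial-time bihierarchy-rounding algorithm from \Cref{thm:rand} to produce the randomized allocation implementing this fractional allocation. Both steps run in polynomial time, and their composition does as well.

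There is essentially no main obstacle here; the corollary is a clean synthesis, and the only point worth spelling out explicitly is that the ``support is MNW'' conclusion of \Cref{thm:rand} interacts with the classical EF1-ness of MNW allocations to deliver the ex post EF1 guarantee.
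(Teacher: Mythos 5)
Your proposal is correct and matches the paper's own (implicit) argument: the corollary is stated immediately after amending the fractional MNW rule to use \Cref{thm:rand}, and its justification is exactly the synthesis you describe\emdash ex ante properties from \Cref{thm:rand-leximin-mnw-properties}, ex post EF1 from the MNW support guaranteed by \Cref{thm:rand} together with the EF1-ness of MNW allocations, and polynomial time by composing the strongly polynomial fractional MNW computation with the bihierarchy rounding. No gaps.
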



\section{Discussion}
\label{sec:disc}

To recap, we showed that under binary additive valuations a deterministic variant of the maximum Nash welfare rule is envy-free up to one good (EF1), Pareto optimal (PO), and group strategyproof (GSP). We also demonstrated that its randomized variant is ex ante EF, ex ante PO, ex ante GSP, and ex post EF1. All our rules are polynomial-time computable. 

\citet{ABCM17} show that under general additive valuations, there is no deterministic rule that is envy-free up to one good (EF1) and strategyproof, even with two agents and $m \ge 5$ goods. At first glance, \Cref{thm:det-mnw}, which establishes $\mnwt$ as both GSP and EF1, seems to show that this impossibility result does not hold for the special case of binary additive valuations. However, the impossibility result of \citet{ABCM17} only applies to rules that allocate all the goods; by contrast, $\mnwt$ does not allocate non-valued goods. This begs the following question: \emph{Under binary additive valuations, is there a deterministic rule that allocates all the goods and achieves EF1, PO, and GSP?} In the appendix, we show that this cannot be achieved by any variant of MNW. 

Another open question is whether the ex ante GSP guarantee of \Cref{cor:rand-magical-rule} can be strengthened to ex post GSP, which would require the randomized rule to be implementable as a probability distribution over deterministic GSP rules. 

Modulo these minor caveats, though, our results are the strongest one could possibly hope for in the domain of binary additive valuations.

\bibliographystyle{named}
\bibliography{abb,ultimate}

\appendix
\section*{Appendix}

\section{Example Illustrating $\mnwt$}

The following examples illustrate how our deterministic rule $\mnwt$ works.

\begin{example}
Let us denote a valuation profile by a matrix, where $n$ rows represent agents $1,\ldots,n$, $m$ columns represent goods $g_1,\ldots,g_m$, and the entry in row $i$ and column $j$ is $v_i(g_j)$. Consider the valuation profile 
$$\vals = \begin{pmatrix}
    1 & 0 \\
    1 & 0
\end{pmatrix}.$$ In this case, the unique allocation $\alloc$ that $\mnwt$ can return is given by $A_1 = \set{g_1}$ and $A_2 = \emptyset$. This is because $g_2$ cannot be allocated as it is non-valued, and $\mnwt$ must prefer the allocation which gives $g_1$ to agent $1$ over the one which gives it to agent $2$. 

Generally, however, there could be multiple allocations that $\mnwt$ can arbitrarily choose from. For example, consider the valuation profile 
$$\vals = \begin{pmatrix}
    1 & 1 & 1\\
    1 & 1 & 1
\end{pmatrix}.$$ $\mnwt$ may return any allocation which gives two goods to agent $1$ and one good to agent $2$. 
\end{example}


\section{Allocating Non-valued Goods}\label{sec:complete}

\begin{theorem}
    No deterministic rule can always output an MNW allocation, always allocate all goods, and be SP for two agents and six goods.
\end{theorem}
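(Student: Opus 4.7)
The plan is to construct a single base profile on which any complete MNW rule must assign some unwanted goods to one of the agents, and then to show that this same agent can profit by misreporting so that those extra goods become genuinely desired. Concretely, I would take the profile $P$ with $v_1 = v_2 = \{g_1, g_2\}$, which leaves $\{g_3,g_4,g_5,g_6\}$ non-valued. The first step is to characterize MNW on $P$: since splitting $g_1$ and $g_2$ between the two agents yields two agents with positive utility while handing both to one agent yields only one, MNW must produce the utility vector $(1,1)$. Therefore the rule's output $\alloc^0 = (A_1^0, A_2^0)$ assigns exactly one of $\{g_1, g_2\}$ to each agent; after relabeling $g_1 \leftrightarrow g_2$ if needed, I assume $g_1 \in A_1^0$ and $g_2 \in A_2^0$.

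Because the rule is complete, $|A_1^0| + |A_2^0| = 6$, so by pigeonhole one agent receives at least three goods. Suppose it is agent $1$; then $A_1^0$ contains at least two non-valued goods, call them $x_1, x_2 \in \{g_3, g_4, g_5, g_6\}$. (If instead agent $2$ has at least three goods, the argument is identical with agent $2$, $g_2$, and two elements of $A_2^0 \cap \{g_3,\ldots,g_6\}$ playing the respective roles.) Next I would define a second profile $P'$ obtained from $P$ by changing agent $1$'s valuation to $v_1' = \{g_1, x_1, x_2\}$, and compute MNW on $P'$: the split $A_1 = \{x_1, x_2\}$, $A_2 = \{g_1, g_2\}$ gives $(2,2)$ with product $4$, which beats $A_1 \supseteq \{g_1, x_1, x_2\}$, $A_2 = \{g_2\}$ yielding $(3,1)$ with product $3$, as well as every other split. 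Hence every MNW allocation on $P'$ gives agent $1$ utility exactly $2$.

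The concluding step compares agent $1$'s outcomes when their true valuation is $v_1' = \{g_1, x_1, x_2\}$: under a truthful report the rule yields utility $2$, whereas under the misreport $\{g_1, g_2\}$ the rule outputs $\alloc^0$, whose bundle $A_1^0 \supseteq \{g_1, x_1, x_2\}$ gives true utility at least $3$. This strict improvement violates SP and establishes the theorem. The main work in the argument is the MNW computation on $P'$, which is elementary; the only subtlety is the ``WLOG'' step choosing which agent receives at least three goods and which of $g_1, g_2$ goes to whom, but both are handled uniformly by swapping the corresponding labels. The heart of the impossibility is the tension between forcing the rule to dispose of non-valued goods and SP's requirement that a deviation which makes those goods valued should not be profitable.
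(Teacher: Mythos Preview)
Your proof is correct and follows essentially the same approach as the paper's: the same base profile with $V_1=V_2=\{g_1,g_2\}$, the same pigeonhole/WLOG step to locate an agent who is stuck with at least two non-valued goods, and the same manipulation in which that agent truthfully values those formerly non-valued goods but does strictly better by misreporting back to the base profile. The only cosmetic difference is in the deviating profile: the paper sets $V'_1=\{g_1,g_2,g_3,g_4\}$ (four goods), whereas you set $V'_1=\{g_1,x_1,x_2\}$ (three goods); in both cases the unique MNW utility vector on the deviating profile is $(2,2)$, so agent~$1$'s truthful utility is $2$ while the misreport yields at least $3$.
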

\begin{proof}
    Suppose there are two agents $\agents = \set{1,2}$ and six goods $\goods = \set{g_1,g_2,g_3,g_4,g_5,g_6}$
    Suppose for a contradiction there existed such a rule $F$. Consider the valuation profile $\vals^1$:
    $$\begin{pmatrix}
    1 & 1 & 0 & 0 & 0 & 0\\
    1 & 1 & 0 & 0 & 0 & 0
    \end{pmatrix}$$
    The only possible MNW allocations are those where one agent gets good $1$ and the other gets good $2$. Once this condition is met, any allocation of the non-valued goods is an MNW allocation. Since there are four such goods, there must be an agent that recieves at least two of them. Without loss of generality, agent $1$ receives goods $3$ and $4$ along with one of items $1$ and $2$, say item $1$. That is, $F(\vals^1) = \alloc^1$ such that $\set{g_1, g_3, g_4} \subseteq A^1_1$ (it is possible they received more of the nonvalued ones).
    
    Now consider another valuation profile $\vals^2$:
    $$\begin{pmatrix}
    1 & 1 & 1 & 1 & 0 & 0\\
    1 & 1 & 0 & 0 & 0 & 0
    \end{pmatrix}$$
    The only possible MNW allocations are those where agent $1$ receives goods $3$ and $4$ and agent $2$ receives goods $1$ and $2$. Therefore, regardless of the allocation chosen, the utility to agent $1$ is exactly $2$, that is if $F(\vals^2) = \alloc^2$, then $v^2_1(\alloc^2_1) = 2$. However, if agent $1$ misreports to match $\vals^1_1$, then their utility $v^2_1(\alloc^1_1) \geq 3$, as agent $1$ likes all three of $g_1,g_3$, and $g_4$. Therefore, $F$ is not SP, a contradiction.
\end{proof}

We wrote an integer linear program (ILP) to check whether there exists a full allocation for every possible binary additive valuation profile with two agents and six goods such that the resulting rule is EF1, PO, and SP. We solved the ILP using CPLEX, and determined that there indeed exists such a rule. Our program does not terminate in reasonable time when run on two agents and seven goods, so whether there exists a deterministic rule that allocates all goods and is EF1, PO, and SP is an open question for $n=2$ and $m \ge 7$. 


\end{document}